\def\alphadot{\dot{\alpha}}
\def\deltadot{\dot{\delta}}
\def\rhodot{\dot{\rho}}
\def\Att{\mathcal{A}}
\def\angmom{\bm{c}}
\def\lenz{\bm{L}}
\def\energy{{\cal E}}
\def\erho{\bm{e}^{\rho}}
\def\eort{\bm{e}^\perp}
\def\erre{\bm{r}}
\def\erredot{\dot{\bm{r}}}
\def\qu{\bm{q}}  
\def\qudot{\dot{\bm{q}}} 
\def\DD{\bm{D}}
\def\EE{\bm{E}}
\def\FF{\bm{F}}
\def\GG{\bm{G}}
\def\JJ{\bm{J}}
\def\DeltaL{\bm{\Delta}_{\widetilde{\lenz}}}
\def\bzero{{\bf 0}}   
\def\C{\mathbb{C}}
\def\Z{\mathbb{Z}}
\def\P{\mathbb{P}} 
\newtheorem{proposition}{\bf Proposition}
\newtheorem{lemma}{\bf Lemma}
\newtheorem{definition}{\bf Definition}
\newtheorem{theorem}{\bf Theorem}
\title{Preliminary orbits with over-determined systems of Keplerian conservation laws}
\author{C. Grassi, G. F. Gronchi}
\begin{document}
\maketitle

\begin{abstract}
  We consider different choices of Keplerian conservation laws for the
  computation of preliminary orbits with two very short arcs (VSAs) of
  astrometric observations. In total we have 7 equations in 4
  unknowns.  Adding two auxiliary variables we can embed the full set
  of conservation laws into a polynomial system of 9 equations.  This
  complete system generically has no solutions.  However, combining
  these equations, in \cite{gbm15} the authors found an
  over-determined polynomial system that is consistent, and leads by
  variable elimination to a univariate polynomial $\mathsf{p}_9$ of
  degree 9 in one radial distance.  In \cite{gbm17} the authors showed
  that this corresponds to taking a subsystem with 7 equations of the
  complete system.  In this paper we consider all the other
  possibilities and we find two additional over-determined cases which
  are consistent and lead to a univariate polynomial $\mathsf{p}_{18}$
  of degree 18 in the same variable as $\mathsf{p}_9$. In the other
  over-determined cases the corresponding system is inconsistent. We
  also present a method to compute an approximate gcd of
  $\mathsf{p}_9$ and $\mathsf{p}_{18}$, that can allow us to find
  preliminary orbits that approximately satisfy inconsistent systems
  of conservation laws, or to discard incompatible pairs of VSAs.
  We show this through some numerical tests with real asteroid data.
  \end{abstract}


\section{Introduction}
\label{s:intro}

The modern asteroid surveys are producing very large databases of
optical observations.  Linking very short arcs (VSAs) of these
observations collected in different nights we can compute asteroid
orbits.  The search for efficient {\em linkage} algorithms is an
interesting mathematical problem: we show how it can be faced using
the first integrals of Kepler's problem.

The use of the Keplerian integrals (KI) for this purpose goes back to
the 70s, see \cite{th77}, where the authors proposed to employ
conservation of angular momentum and energy to compute preliminary
orbits for artificial Earth satellites. A Newton-Raphson method was
applied to compute the solutions but, according to \cite{taff84}, the
algorithm is extremely sensitive to the errors.  Later this algorithm
was improved, however the results were not satisfactory yet, see
\cite{trs84}.

More recently, in \cite{gdm10}, \cite{gfd11} the authors employed the
Keplerian integrals to write polynomial equations for the
problem: in \cite{gdm10} the same conservation laws as in \cite{th77}
were used, leading to a univariate polynomial equation of degree 48,
while in \cite{gfd11} the projection of the Laplace-Lenz vector onto a
suitable direction was used in place of the energy, leading to a
polynomial of degree 20.

These methods can be applied also to compute preliminary orbits of
satellites and space debris, and allow to include the $J_2$ effect
\cite{ftmr2010}.  A comparison between the two KI methods can be found
in \cite{dimare2011}, where these methods were successfully applied to
a large scale simulation of space debris observations in the upper
part of the LEO region.

In these methods the number of equations is equal to the number of
unknowns.  In \cite{gbm15} a combination of the Keplerian
conservation laws was used to obtain an over-determined polynomial
system which is consistent, that is it always admits solutions, at
least in the complex field. This system leads to a univariate
polynomial equation of degree 9.  In \cite{gbm17}, using Gr\"obner
basis theory, the authors gave an interpretation of the equations
considered in \cite{gbm15}.
 
Recently, the KI method introduced in \cite{gbm15} was successfully
used in \cite{rgbj24} as the first step of a complete orbit
determination pipeline, that has been applied to the observations of
the isolated tracklet file.\footnote{{\tt https://minorplanetcenter.net/iau/ITF/itf.txt.gz}} The authors were able to link several VSAs and compute a few
thousand orbits.

In Section~\ref{s:results} we introduce the mathematical setting for
this work and describe the results that we obtained in the attempt of
using as many Keplerian conservation laws as possible for the
computation of preliminary orbits.

\section{Preliminaries and description of the results}
\label{s:results}

Let us consider a reference frame with origin at the center of the
Sun. Denote by $\qu,\qudot$ the position and velocity vectors of the
observer and by $\erre,\erredot$ the position and velocity vectors of
the observed body at time $t$. We can write
\[
\begin{aligned}
  \erre &= \qu + \rho\erho,\\
  \erredot &=\qudot + \rhodot\erho + \rho\eort,
\end{aligned}
\]
where $\rho, \rhodot$ are the topocentric radial distance and velocity
of the observed body, $\erho$ is the line of sight unit vector and
$\eort = \frac{d\erho}{dt}$. Using angular coordinates
$(\alpha,\delta) \in [-\pi,\pi) \times [-\pi/2,\pi/2]$, which are
  usually right ascension and declination, we have
\[
\erho = (\cos\alpha\cos\delta, \sin\alpha\cos\delta,
\sin\delta).
\]
In this frame the Keplerian energy of the observed object is
\[
\energy = \frac{1}{2}|\erredot|^2 - \frac{\mu}{|\erre|},
\]
while the angular momentum $\angmom$ and the Laplace-Lenz vector
$\lenz$ are given by
\[
\begin{aligned}
  \angmom &= \erre\times\erredot,\\
  \mu\lenz &= \erredot\times\angmom - \frac{\mu}{|\erre|}\erre =
  \left(|\erredot|^2 - \frac{\mu}{|\erre|}\right)\erre -
  (\erredot\cdot\erre)\erredot.
\end{aligned}
\]
Given two VSAs of optical observations of a celestial body we can
compute two attributables
\[
  {\cal A}_1 = (\alpha_1,\delta_1,\alphadot_1,\deltadot_1),\qquad
  {\cal A}_2 = (\alpha_2,\delta_2,\alphadot_2,\deltadot_2)
\]
at epochs $t_1, t_2$, see \cite{milani2001}. Assuming that
$\qu_j,\qudot_j$ for $j=1,2$ are known, we try to determine an
initial orbit using the Keplerian conservation laws
\begin{equation}
  \angmom_1-\angmom_2 = \bzero,\qquad \mu(\lenz_1-\lenz_2) = \bzero,\qquad
  \energy_1-\energy_2 =0
  \label{syskepint}
  \end{equation}
in the unknowns $\rho_1,\rho_2,\rhodot_1,\rhodot_2$.
Introducing the auxiliary variables
\[
z_j = \frac{\mu}{|\erre_j|},\qquad j=1,2
\]
and the polynomials
\[
\begin{split}
  &\mu\widetilde\lenz_j = (|\erredot_j|^2 - z_j)\erre_j -
  (\erredot_j\cdot\erre_j)\erredot_j,\cr
  &\widetilde\energy_j = \frac{1}{2}|\erredot_j|^2 - z_j\cr
  &\zeta_j = z_j^2|\erre_j|^2-\mu^2,\cr
  \end{split}
\]
we embed \eqref{syskepint} into the polynomial system
\begin{equation}
  \angmom_1-\angmom_2 = \bzero,\quad
  \mu(\widetilde\lenz_1-\widetilde\lenz_2) = \bzero,\quad
  \widetilde\energy_1-\widetilde\energy_2 =0,\quad
  \zeta_1 = 0, \quad \zeta_2 = 0
  \label{syscomplete}
\end{equation}
where the unknowns are $\rho_1,\rho_2,\rhodot_1,\rhodot_2,z_1,z_2$.
However, system \eqref{syscomplete} is over-determined (9 equations
and 6 unknowns) and inconsistent, that is it generically does not
admit solutions, not even in the complex field, not even when ${\cal
  A}_1, {\cal A}_2$ belong to the same observed body, see Table
\ref{t:cardV} in Section \ref{s:diffgen}.

In Appendix \ref{app:AlgGeom} we define more precisely what we mean
when we claim that some property holds generically.

Let
\[
\DD_j = \qu_j\times\erho_j,\qquad j=1,2
\]
and assume
\[
\DD_1\times\DD_2\cdot\erho_1\times\erho_2 \neq 0,
\]
which generically holds.
In \cite{gbm17} the authors showed that the system
\begin{equation}
  \angmom_1-\angmom_2 = \bzero,\quad
  \mu(\widetilde\lenz_1-\widetilde\lenz_2) = \bzero,\quad
  \widetilde\energy_1-\widetilde\energy_2 =0,
\label{sysred9}
\end{equation}
obtained from \eqref{syscomplete} neglecting the last two
equations, is still over-determined but nevertheless consistent, i.e. it
generically admits solutions in the complex field, even when the two
VSAs do not belong to the same celestial body.
The proof has been conducted by computing a Gr\"obner basis of the ideal
$I$ generated by the polynomials
\begin{eqnarray}
\mathfrak{q}_1 &=& (\angmom_1-\angmom_2)\cdot\DD_1\times\DD_2, \label{q1}\\
\mathfrak{q}_2 &=& (\angmom_1-\angmom_2)\cdot\DD_1\times(\DD_1\times\DD_2), \label{q2}\\
\mathfrak{q}_3 &=& (\angmom_1-\angmom_2)\cdot\DD_2\times(\DD_1\times\DD_2), \label{q3}\\
\mathfrak{q}_4 &= &\mu(\widetilde{\lenz}_1-\widetilde{\lenz}_2)\cdot\erho_1\times\erho_2, \label{q4}\\
\mathfrak{q}_5 &= &\mu(\widetilde{\lenz}_1-\widetilde{\lenz}_2)\cdot\DD_2, \label{q5}\\
\mathfrak{q}_6 &= &\mu(\widetilde{\lenz}_1-\widetilde{\lenz}_2)\cdot\DD_1, \label{q6}\\
\mathfrak{q}_7 &= &\widetilde{\energy}_1-\widetilde{\energy}_2 \label{q7}
\label{Igen}
\end{eqnarray}
for the lexicographic (for short, lex) ordering with
\begin{equation}
  \rhodot_1 \succ \rhodot_2 \succ z_1 \succ z_2 \succ \rho_1 \succ \rho_2.
  \label{lexord}
\end{equation}
The selected projections of $\angmom_1-\angmom_2$ and
$\mu(\widetilde{\lenz}_1-\widetilde{\lenz}_2)$ have the purpose of
simplifying the polynomials in the system, in fact
\begin{eqnarray}
  \mathfrak{q}_1 &=& q(\rho_1,\rho_2), \label{qu}\\
  \mathfrak{q}_2 &=& |\DD_1\times\DD_2|^2\rhodot_1 - 
\JJ(\rho_1,\rho_2)\cdot \DD_1\times(\DD_1\times\DD_2), \label{rhodot1}\\
\mathfrak{q}_3 &=&  |\DD_1\times\DD_2|^2\rhodot_2 - 
\JJ(\rho_1,\rho_2)\cdot \DD_2\times(\DD_1\times\DD_2), \label{rhodot2}
\end{eqnarray}
where $q$ and $\JJ$ are defined as in \cite{gbm15} and are reported in
Appendix \ref{app:angmom} for completeness. Both $q$ and the three
components of $\JJ$ are quadratic forms in the variables
$\rho_1,\rho_2$, without the mixed monomial $\rho_1\rho_2$.

The projections of $\mu(\widetilde\lenz_1 - \widetilde\lenz_2)$ can be
written as follows:
\begin{eqnarray*}
  \mathfrak{q}_4 &=& -(\DD_1\cdot\erho_2)z_1 - (\DD_2\cdot\erho_1)z_2 + \mathfrak{f}_4,\\
\mathfrak{q}_5 &=& -(\DD_2\cdot\erre_1)z_1 + \mathfrak{f}_5,\\
\mathfrak{q}_6 &=& (\DD_1\cdot\erre_2)z_2 + \mathfrak{f}_6,
\end{eqnarray*}
for three polynomials $\mathfrak{f}_4$, $\mathfrak{f}_5$,
$\mathfrak{f}_6$ in the variables $\rho_1,\rho_2,\rhodot_1,\rhodot_2$
defined by
\[
\begin{split}
  \mathfrak{f}_4 &= \left[ |\erredot_1|^2\erre_1
    -(\erredot_1\cdot\erre_1)\erredot_1 - |\erredot_2|^2\erre_2 +
    (\erredot_2\cdot\erre_2)\erredot_2 \right]
  \cdot\erho_1\times\erho_2, \cr
  \mathfrak{f}_5 &= \left[ |\erredot_1|^2\erre_1 - (\erredot_1\cdot\erre_1)\erredot_1 + (\erredot_2\cdot\erre_2)\erredot_2 \right]\cdot \DD_2,  \cr
  %
%
  \mathfrak{f}_6 &= [-(\erredot_1\cdot\erre_1)\erredot_1 -
  |\erredot_2|^2\erre_2 +
  (\erredot_2\cdot\erre_2)\erredot_2]\cdot\DD_1. \cr
\end{split}
\]

We can obtain a Gr\"obner basis for $I$ of the form
\[
\{\rhodot_1+\mathfrak{g}_1, \ \rhodot_2+\mathfrak{g}_2, \
z_1+\mathfrak{g}_3, \ z_2+\mathfrak{g}_4, \ \rho_1 + \mathfrak{g}_5, \
\mathsf{p}_9\}
\]
with polynomials $\mathfrak{g}_k(\rho_2)$ and
$\mathsf{p}_9(\rho_2)$. Generically, we have\footnote{In \cite{gbm17},
\cite{gronchi_I-CELMECH} the form of the Gr\"obner basis is different:
to obtain the form displayed here we only need to divide some
polynomials by $\mathfrak{w}=\rho_1+\mathfrak{z}(\rho_2)$ and
$\mathfrak{v} = \mathfrak{g}_6(\rho_2)$.}
\[
\textrm{deg}(\mathfrak{g}_k)\leq 8 \quad (1\leq k\leq 5),\qquad
\textrm{deg}(\mathfrak{g}_6)= 9.
\]

From the properties of Gr\"obner bases, 9 is the minimal degree of a
univariate polynomial in $I$ in the unknown $\rho_2$.

\medbreak
Let us set
\begin{equation}
  \mathfrak{q}_8 = \zeta_1,\qquad \mathfrak{q}_9 = \zeta_2.
  \label{q8q9}
\end{equation}
If we add one of the latter polynomials to the generators of $I$, then
the corresponding ideal is the whole ring
$\C[z_1,z_2,\rhodot_1,\rhodot_2,\rho_1,\rho_2]$ and the related
variety is empty, see Section~\ref{s:diffgen}. In other words, for
example, there is generically no solution of the polynomial system
\[
\mathfrak{q}_k = 0,\qquad k=1,\ldots,8.
\]
Therefore, even if the two VSAs belong to the same celestial body, the
unavoidable astrometric errors, the errors in the numerical
computations, and the simplified dynamical model prevent us from
computing an orbit in this way.

\medbreak We can consider different ways of generating ideals with the
polynomials $\mathfrak{q}_k$, $k=1,\ldots,9$ whose variety is finite
and not empty.
This is done by dropping some of the $\mathfrak{q}_k$. We
will show in Section \ref{s:diffgen} that, with this procedure, the ideal
generated by $\mathfrak{q}_1,\ldots,\mathfrak{q}_7$ gives the smallest
non-empty variety (with 9 points), while the successive smallest
varieties have 18 points.  These are the varieties of the ideals
generated by
\[
\mathfrak{q}_k,\qquad  k=1,\ldots,6 
\]
and either $\mathfrak{q}_8$ or $\mathfrak{q}_9$.
The variety $V(J)$ of the polynomial ideal
\[
J = \langle \mathfrak{q}_1,\ldots,
\mathfrak{q}_6, \mathfrak{q}_9 \rangle
\]
(7 polynomials in 6 variables) is described in Section~\ref{s:VJ}.

In Section \ref{s:groebner} we compute a Gr\"obner basis of $J$ for
the lex ordering with relations \eqref{lexord}.
This basis has the form
\begin{equation}
\{\rhodot_1+\mathfrak{h}_1, \ \rhodot_2+\mathfrak{h}_2, \
z_1+\mathfrak{h}_3, \ z_2+\mathfrak{h}_4, \ \rho_1 + \mathfrak{h}_5, \
\mathsf{p}_{18}\}
\label{groebner18}
\end{equation}
with polynomials $\mathfrak{h}_k(\rho_2)$ and $\mathsf{p}_{18}$.
Generically we have
\[
\textrm{deg}(\mathfrak{h}_k)\leq 17 \quad (1\leq k\leq 5),\qquad
\textrm{deg}(\mathfrak{h}_6)= 18.
\]
Note that for a generic choice of the data $\mathcal{A}_j,\qu_j,\qudot_j
\,(j=1,2)$ we have
\begin{equation}
J = \langle \angmom_1 - \angmom_2,\ \mu(\widetilde\lenz_1 -
\widetilde\lenz_2),\ \zeta_2 \rangle.
\label{idealJ}
\end{equation}

When the two attributables belong to the
same celestial body we would like to be able to compute a common root
of these polynomials.  However, due to the already mentioned errors in
the observations, in the numerical computations and in the model, the
greatest common divisor (gcd) of $\mathsf{p}_9$ and $\mathsf{p}_{18}$
is 1, i.e. the two polynomials are relatively prime.
In Section~\ref{s:approxgcd} we show a way to compute orbits which are
defined starting from the roots $\rho_2$ of the approximate gcd of
$\mathsf{p}_9$ and $\mathsf{p}_{18}$. The singular value decomposition
is used as in \cite{Corless1995} to compute the degree of the
approximate gcd.
We conclude with a few numerical tests using real observations of
asteroids.

\section{Using different sets of Keplerian integrals}
\label{s:diffgen}

We consider the 9 polynomials
\begin{equation}
\mathfrak{q}_1,\ldots,\mathfrak{q}_9
\label{q1q9}
\end{equation}
introduced in Section \ref{s:results}.
We search for a subset of the $\mathfrak{q}_k$ polynomials generating
an ideal $I$ whose associated variety $V(I)$ is finite and not
empty. Moreover, we wish $V(I)$ to be as small as possible: for this
purpose we keep in the list of generators as many
$\mathfrak{q}_k$ polynomials as possible (provided
$V(I)\neq\emptyset$), and prefer the $\mathfrak{q}_k$ with low
degrees. The latter consideration leads us to always include
$\mathfrak{q}_1, \mathfrak{q}_2, \mathfrak{q}_3$, since they have
total degree 2. Moreover, $\mathfrak{q}_2$ and $\mathfrak{q}_3$ are
linear in $\rhodot_1$ and $\rhodot_2$ respectively, and allow a simple
elimination of these variables.

First we consider all the over-determined cases ($>6$ polynomials, $6$
unknowns) and then we study the relevant balanced cases ($6$ polynomials,
$6$ unknowns).

We compute the coefficients of the selected $\mathfrak{q}_k$ with the
values of the data given in \eqref{q_qdot_dat}, \eqref{adot_ddot_dat},
\eqref{sig_tau_dat} in Appendix~\ref{app:data}, and show in
Tables~\ref{t:cardV} and \ref{t:balanced} the number of points of the
varieties $V(I)$ obtained with the different choices of the generators.

\subsubsection*{Over-determined cases}

\begin{table}[t]
  \begin{center}
    \begin{tabular}{c|c|c|c|c|c|c|c}
      &\multicolumn{3}{c|}{\small$\mu(\widetilde{\lenz}_1-\widetilde{\lenz}_2)$}  &\multicolumn{1}{c|}{\small$\widetilde{\energy}_1-\widetilde{\energy}_2$} &$\zeta_1$ &$\zeta_2$ &    \cr
      \hline
      $\#$gen &$\mathfrak{q}_4$ &$\mathfrak{q}_5$ &$\mathfrak{q}_6$ &$\mathfrak{q}_7$ &$\mathfrak{q}_8$ &$\mathfrak{q}_9$ &$\#V(I)$\cr
      \hline
      9 &x &x &x &x &x &x &0\cr
      \hline\hline
      8 &x &x &x &x &{$\square$} &{$\square$} &0\cr
      \hline
      8  &x &x &x & &x &x &0\cr
      \hline
      8  &{\small$\triangle$} &{\small$\triangle$} &{\small$\triangle$} &x &x &x &0\cr    
      \hline\hline
      7 &x &x &x &x & & &9\cr
      \hline
      7  &x &x &x & &{$\square$} &{$\square$} &18\cr
      \hline
      7  &{$\triangle$} &{$\triangle$} &{$\triangle$} &x &{$\square$} &{$\square$} &0\cr
      \hline
      7  &{\small$\triangle$} &{\small$\triangle$} &{\small$\triangle$} & &x &x &0\cr    
      \hline
      7  &{o} &{o} &{o} &x &x &x &0\cr 
    \end{tabular}
  \end{center}
  \caption{The considered number of generators ($\#$gen) of the ideal
    $I$ is displayed in the first column. In each line we consider a
    different choice of generators: $\mathfrak{q}_1, \mathfrak{q}_2,
    \mathfrak{q}_3$ do not appear but are always included. The symbol
             {x} means that we keep that polynomial; instead, we keep
             only one of the three polynomials marked with {o}; for
             the triplets marked with {\small$\triangle$} we keep only
             two of them; for the pairs marked with {$\square$} we
             keep only one of the two. The last column gives the
             number of points in the variety associated to each
             considered ideal.}
  \label{t:cardV}
\end{table}

In Table~\ref{t:cardV} we consider all the possible choices of the
$\mathfrak{q}_k$ generators leading to an over-determined system.
We see that in the cases with 9 and 8 generators the corresponding
$V(I)$ is always empty, i.e. the considered systems are inconsistent.

On the other hand, among the ideals with 7 generators, we find the
case
\begin{equation}
I = \langle\mathfrak{q}_1,\ldots,\mathfrak{q}_7\rangle,
\label{9punti}
\end{equation}
with $\#V(I)=9$. This property actually holds for a generic choice of the data
(see \cite{gbm15}, \cite{gbm17}).
In Table~\ref{t:cardV} we also find the 2 cases
\[
I = \langle\mathfrak{q}_1,\ldots,\mathfrak{q}_6,\mathfrak{q}_8\rangle
\hskip 1cm
\mathrm{and}
\hskip 1cm
I = \langle\mathfrak{q}_1,\ldots,\mathfrak{q}_6,\mathfrak{q}_9\rangle,
\]
with $\#V(I)=18$. In Section~\ref{s:VJ} we will show that also this
property holds generically.
In the other cases of the table we find $\#V(I)=0$. More specifically,
we prove the following result.

\begin{proposition}
With reference to Table~\ref{t:cardV}, when $\#V(I)=0$ for the special
values \eqref{q_qdot_dat}, \eqref{adot_ddot_dat}, \eqref{sig_tau_dat},
then actually $\#V(I)=0$ for a generic choice of the data.
\end{proposition}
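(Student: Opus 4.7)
The proposition can be recast in algebraic-geometric terms. Viewing each $\mathfrak{q}_k$ as an element of $\C[\theta][X]$, where $\theta = (\mathcal{A}_j, \qu_j, \qudot_j)_{j=1,2}$ collects the data and $X = (\rhodot_1, \rhodot_2, z_1, z_2, \rho_1, \rho_2)$ the unknowns, the incidence variety $Z = \{(\theta, X) : \mathfrak{q}_k(\theta, X) = 0 \text{ for all selected } k\}$ projects onto parameter space with constructible image, which is exactly the locus where $V(I_\theta) \neq \emptyset$. Proving $V(I_\theta) = \emptyset$ generically amounts to showing that this image lies in a proper Zariski closed subset, equivalently (by the Nullstellensatz) that $1$ belongs to the extended ideal $I \cdot \C(\theta)[X]$.

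The strategy is a reduction to one of the \emph{base} subsystems already analysed in Section~\ref{s:results}: $\langle \mathfrak{q}_1,\ldots,\mathfrak{q}_7\rangle$ with its 9-point variety and lex Gröbner basis ending in $\mathsf{p}_9(\rho_2)$, or $\langle \mathfrak{q}_1,\ldots,\mathfrak{q}_6, \mathfrak{q}_9\rangle$ with its 18-point variety and basis ending in $\mathsf{p}_{18}(\rho_2)$. For each zero-variety row of Table~\ref{t:cardV}, I would isolate such a base subsystem and reduce each ``extra'' generator modulo the base Gröbner basis to a univariate polynomial $\tilde{\mathfrak{q}}(\rho_2;\theta) \in \C[\theta][\rho_2]$. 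The full system is consistent iff this reduction shares a root with $\mathsf{p}_9$ (respectively $\mathsf{p}_{18}$), and inconsistency is detected by the non-vanishing of the resultant $R(\theta) = \mathrm{Res}_{\rho_2}(\mathsf{p}_9, \tilde{\mathfrak{q}})$ (resp.\ with $\mathsf{p}_{18}$). Since $\#V(I_{\theta^*}) = 0$ by the tabulated numerical check, $R(\theta^*) \neq 0$; hence $R$ is a nonzero polynomial in $\theta$, its vanishing locus is a proper Zariski closed subset, and off of it $V(I_\theta) = \emptyset$.

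The main technical obstacle lies in the rows marked $\triangle$ and $o$, where only two (respectively one) of $\mathfrak{q}_4,\mathfrak{q}_5,\mathfrak{q}_6$ are retained: the remaining subsystem is not among those analysed earlier in the paper, and one would need to establish a generic lex Gröbner basis for it first in order to carry out the same reduction-plus-resultant argument with the extra generators. In every case, a separate verification is required that $\theta^*$ lies in the Zariski open locus where the parametric Gröbner basis specialises correctly to the numeric one at $\theta^*$ — i.e.\ where none of the leading coefficients appearing in the symbolic reduction vanish at $\theta^*$ — so that $\#V(I_{\theta^*}) = 0$ genuinely reflects the generic cardinality rather than a coincidental degeneration. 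This non-degeneracy can be checked numerically alongside the variety computation, making the overall strategy concrete and algorithmic.
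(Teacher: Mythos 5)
Your reduction-plus-resultant strategy is sound only for the rows of Table~\ref{t:cardV} that \emph{contain} one of the consistent base systems $\langle\mathfrak{q}_1,\ldots,\mathfrak{q}_7\rangle$ or $\langle\mathfrak{q}_1,\ldots,\mathfrak{q}_6,\mathfrak{q}_8\rangle$, $\langle\mathfrak{q}_1,\ldots,\mathfrak{q}_6,\mathfrak{q}_9\rangle$, i.e.\ the rows with 8 or 9 generators. But those rows are exactly the ones that need no work: since adding a generator to an ideal with empty variety keeps the variety empty, the whole proposition reduces to the twelve 7-generator cases. Every one of those twelve cases drops at least one of $\mathfrak{q}_4,\mathfrak{q}_5,\mathfrak{q}_6$ (that is what the symbols $\triangle$ and o encode), so \emph{none} of them contains a base subsystem with a known shape-position Gr\"obner basis ending in $\mathsf{p}_9$ or $\mathsf{p}_{18}$. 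You flag this yourself as "the main technical obstacle," but you leave it unresolved, and resolving it your way would require establishing generic lex Gr\"obner bases in shape position for a dozen new subsystems — an undertaking comparable to the entire content of Section~\ref{s:groebner}, which is devoted to just one such basis. As written, the proof covers none of the cases that carry the content of the statement.

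The paper's route avoids needing any consistent base subsystem. For each 7-generator case it eliminates $z_1,z_2$ by direct substitution or by resultants (organized into three groups according to whether $\mathfrak{q}_4$ and $\mathfrak{q}_7$ are present), obtaining an ideal $K=\langle\mathfrak{q}_1,\mathfrak{p}_1,\mathfrak{p}_2\rangle$ in $\rho_1,\rho_2$ alone with $V(K)\supseteq V(K_{\mathrm{elim}})$; it then homogenizes the three generators, verifies $V(\widetilde K)=\emptyset$ at the rational sample data via the Projective Weak Nullstellensatz (Theorem~\ref{projweakNSS}), and invokes the Macaulay resultant (Theorem~\ref{t:Macaulay}) as the single universal polynomial in the data whose non-vanishing at $\bm d_*$ certifies generic emptiness. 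That certificate is what your sketch is missing for the cases that matter; your Sylvester-resultant certificate plays the analogous role only where a shape-form base basis is already available. Your closing remark about checking that the specialization at $\theta^*$ is non-degenerate is a legitimate concern, and the paper addresses the same point by appealing to the generic compatibility of Gr\"obner bases with specialization.
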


\begin{proof}
  First we note that, if we add an equation to an inconsistent system,
  the system remains inconsistent. For this reason we only need to
  show that the proposition holds in the cases of Table~\ref{t:cardV}
  with 7 generators.
  We have 12 cases of this type, listed in a synthetic way in
  Table~\ref{t:cardV}, that can be divided into 3
  groups:\footnote{Recall that $\mathfrak{q}_1, \mathfrak{q}_2,
  \mathfrak{q}_3$ are always included among the generators.}
  \begin{itemize}
  \item[i)] both $\mathfrak{q}_4$ and $\mathfrak{q}_7$ are generators
    of $I$:
    \[
    \begin{split}
      &\{\mathfrak{q}_4,\mathfrak{q}_5,\mathfrak{q}_7,\mathfrak{q}_8\},
      \ \{\mathfrak{q}_4,\mathfrak{q}_5,\mathfrak{q}_7,\mathfrak{q}_9\},\ \{\mathfrak{q}_4,\mathfrak{q}_6,\mathfrak{q}_7,\mathfrak{q}_8\},\cr
      &\{\mathfrak{q}_4,\mathfrak{q}_6,\mathfrak{q}_7,\mathfrak{q}_9\},\ \{\mathfrak{q}_4,\mathfrak{q}_7,\mathfrak{q}_8,\mathfrak{q}_9\};\cr
      \end{split}
    \]
    
  \item[ii)] only one between $\mathfrak{q}_4$ and $\mathfrak{q}_7$
    is a generator of $I$:
    \[
    \begin{split}
      &\{\mathfrak{q}_4,\mathfrak{q}_5,\mathfrak{q}_8,\mathfrak{q}_9\},\ 
      \{\mathfrak{q}_4,\mathfrak{q}_6,\mathfrak{q}_8,\mathfrak{q}_9\},\
      \{\mathfrak{q}_5,\mathfrak{q}_6,\mathfrak{q}_7,\mathfrak{q}_8\},\cr
      &\{\mathfrak{q}_5,\mathfrak{q}_6,\mathfrak{q}_7,\mathfrak{q}_9\},\
      \{\mathfrak{q}_5,\mathfrak{q}_7,\mathfrak{q}_8,\mathfrak{q}_9\},\
      \{\mathfrak{q}_6,\mathfrak{q}_7,\mathfrak{q}_8,\mathfrak{q}_9\};\cr
      \end{split}
    \]

  \item[iii)] neither $\mathfrak{q}_4$ nor $\mathfrak{q}_7$ is among
    the generators of $I$:
    \[
    \{\mathfrak{q}_5,\mathfrak{q}_6,\mathfrak{q}_8,\mathfrak{q}_9\}.
    \]
    
  \end{itemize}
  In group i), from $\mathfrak{q}_4=\mathfrak{q}_7=0$ we obtain
    \[
    z_1(\rho_1,\rho_2) = \frac{\mathfrak{f}_4 +
      (\DD_2\cdot\erho_1)\mathfrak{f}_7}{(\DD_1\cdot\erho_2) +
      (\DD_2\cdot\erho_1)},
    \qquad
    z_2(\rho_1,\rho_2) = \frac{\mathfrak{f}_4 -
      (\DD_1\cdot\erho_2)\mathfrak{f}_7}{(\DD_1\cdot\erho_2) +
      (\DD_2\cdot\erho_1)}.
    \]
    We can eliminate $z_1, z_2$ by substitution into the
    other two generators and obtain two polynomials $\mathfrak{p}_1,
    \mathfrak{p}_2$ in the variables $\rho_1,\rho_2$.
    
  \medbreak In group ii), using $\mathfrak{q}_4=0$ or
  $\mathfrak{q}_7=0$, we can eliminate one variable between $z_1$ and
  $z_2$. Assume we eliminate $z_2$, then we have three generators
  where $z_1$ appears. Computing the resultant of two pairs of these
  polynomials with respect to $z_1$ we obtain two polynomials
  $\mathfrak{p_1}, \mathfrak{p_2}$ in $\rho_1,\rho_2$.
  
  \medbreak
  In group  iii) we eliminate $z_1$ and $z_2$ by computing
  \[
  \mathfrak{p}_{1} = \mathrm{res}(\mathfrak{q}_5,\mathfrak{q}_8,z_1),
  \hskip 0.5cm
  \mathfrak{p}_{2} = \mathrm{res}(\mathfrak{q}_6,\mathfrak{q}_9,z_2).
  \]

Then, for all the cases we consider the ideal $K$ generated
by
\[
\{\mathfrak{q}_1,\ \mathfrak{p}_{1},\ \mathfrak{p}_{2}\}.
\]
In i) $K$ corresponds to the elimination ideal $K_\mathrm{elim} =
I\cap \C[\rho_1,\rho_2]$.  In ii) and iii) we have
$K\subseteq K_\mathrm{elim}$, so that
\[
V(K) \supseteq V(K_\mathrm{elim}).
\]

\medbreak To conclude the proof of the proposition we proceed in the
following way.  We homogenize the generators
$\mathfrak{q}_1,\ \mathfrak{p}_{1},\ \mathfrak{p}_{2}$ by adding a new
variable $x$, thus obtaining three homogeneous polynomials
\[
\widetilde{\mathfrak{q}}_1, \ \widetilde{\mathfrak{p}}_{1},
\ \widetilde{\mathfrak{p}}_{2}
\]
in the three variables $x,\rho_1,\rho_2$, and consider the homogeneous
ideal $\widetilde{K}$ generated by the latter
polynomials.\footnote{recall that an ideal $K$ is called homogeneous
if for every $p\in K$, each homogeneous term of $p$ belongs to $K$ as well. An
ideal generated by homogeneous polynomials is homogeneous, see
\cite{CLO2}.} Note that
\[
V(\widetilde{K})\supset V(K).
\]

For each of the cases above, we consider the data defined in
\eqref{q_qdot_dat}, \eqref{adot_ddot_dat}, \eqref{sig_tau_dat},
denoted shortly by $\bm{d}_*$, and compute a Gr\"obner basis for the
lex ordering with {\tt Maple 18} \cite{maple18}.
Note that Gr\"obner bases of an ideal with {\em specialized}
generators, i.e. generators whose symbolic coefficients are set to
some specific values, generically correspond to the specialization of
Gr\"obner bases computed with symbolic coefficients, see
\cite[Chapter 6.3]{CLO1}.
%
In all these cases we find a Gr\"obner basis fulfilling property ii)
of Theorem \ref{projweakNSS} in Appendix~\ref{app:AlgGeom}. Therefore
we obtain
\[
V(\widetilde{K}) = \emptyset.
\]
By Theorem \ref{t:Macaulay} in Appendix~\ref{app:AlgGeom}, the
Macaulay resultant $\mathrm{Res}(\bm{d})$ of
$\widetilde{\mathfrak{q}}_1, \widetilde{\mathfrak{p}}_1,
\widetilde{\mathfrak{p}}_2$, which is a polynomial function of the
data $\bm{d}\in{\cal D}$, is non-vanishing at $\bm{d}=\bm{d}_*$:
\[
\mathrm{Res}(\bm{d}_*)\neq 0.
\]
Therefore the Zariski-open set defined by
\[
\{ \bm{d}\in{\cal D}: \mathrm{Res}(\bm{d})\neq 0 \}
\]
is not empty, and Zariski-dense, see Appendix \ref{app:AlgGeom}.  We
conclude that $V(\widetilde{K})=\emptyset$ for a generic choice of the
data $\bm{d}$. This implies that $V(K)=\emptyset$, which yields
$V(K_\mathrm{elim})=\emptyset$, and therefore $V(I)=\emptyset$.

\end{proof}

\subsubsection*{Balanced cases}

\begin{table}[t]
  \begin{center}
    \begin{tabular}{c|c|c|c|c|c|c}
      \multicolumn{3}{c|}{\small$\mu(\widetilde{\lenz}_1-\widetilde{\lenz}_2)$}  &\multicolumn{1}{c|}{\small$\widetilde{\energy}_1-\widetilde{\energy}_2$} &$\zeta_1$ &$\zeta_2$ &    \cr
      \hline
      $\mathfrak{q}_4$ &$\mathfrak{q}_5$ &$\mathfrak{q}_6$ &$\mathfrak{q}_7$ &$\mathfrak{q}_8$ &$\mathfrak{q}_9$ &$\#V(I)$\cr
      \hline
        x & & &x &{$\square$} &{$\square$} &20\cr
      \hline
         &x & &x &x & &20\cr
      \hline
         & &x &x &x & &24\cr
      \hline
       &x & &x & &x &24\cr
      \hline
       & &x &x & &x &20\cr      
      \hline
       x              & & & &x &x &40\cr 
      \hline
       &{$\square$} &{$\square$} & &x &x &40\cr 
      \hline
       & & &x &x &x &48\cr 
    \end{tabular}
  \end{center}
  \caption{In each line we consider a different choice of generators
    of the ideal $I$: $\mathfrak{q}_1, \mathfrak{q}_2, \mathfrak{q}_3$
    do not appear, but are always kept in the set of generators. The
    symbols {x} and {$\square$} have the same meaning as in
    Table~\ref{t:cardV}.  The last column gives the number of points
    in the variety associated to each considered ideal.}
  \label{t:balanced}
\end{table}

In Table~\ref{t:balanced} we display the relevant cases with 6
generators, assuming the values $\bm{d}=\bm{d}_*$ of the data.  The
number of points belonging to $V(I)$, are computed with {\tt Maple
  18}.

We consider as non relevant (and we exclude from the list) the cases
that can be obtained by dropping polynomials from sets of generators
of a non-empty $V(I)$.  In fact, when we already have
$V(I)\neq\emptyset$, dropping a generator decreases the number of
Keplerian conservation laws that are satisfied, while our goal is to
use as many of them as possible.

\bigbreak
Looking at Tables~\ref{t:cardV}, \ref{t:balanced} we note that the
smallest non-empty variety
is made of 9 points and is associated to the ideal
$\langle\mathfrak{q}_1, \dots, \mathfrak{q}_7\rangle$ considered in
\cite{gbm15}.

The second-smallest varieties are the ones corresponding to the ideals
$\langle\mathfrak{q}_1, \ldots, \mathfrak{q}_6, \mathfrak{q}_8\rangle$
and $\langle\mathfrak{q}_1, \ldots, \mathfrak{q}_6,
\mathfrak{q}_9\rangle$. The latter, denoted by $V(J)$,
is discussed
in Section \ref{s:VJ}, but similar considerations apply to the
former. We chose to study $V(J)$ because all the other varieties in
Tables~\ref{t:cardV}, \ref{t:balanced} are either empty, or contain
more than 18 points.

\section{The algebraic variety $V(J)$}
\label{s:VJ}

We show that the variety $V(J)$ of the ideal $J$ defined in
\eqref{idealJ} is generically made of 18 points.

Using relations $\mathfrak{q}_2 = \mathfrak{q}_3 = 0$ we can easily
eliminate $\rhodot_1, \rhodot_2$ (see \eqref{rhodot1},
\eqref{rhodot2}) and consider the system
\begin{equation}
  q = 0,\quad
  \DeltaL = \bzero,\quad
  \zeta_2 = 0, 
  \label{modsysred18}
\end{equation}
where $q=\mathfrak{q}_1$ is defined by \eqref{qpoly}, and 
\[
\DeltaL = \mu(\widetilde{\lenz}_1 - \widetilde{\lenz}_2),
\]
where we have replaced $\rhodot_1, \rhodot_2$ with their expressions
in terms of $\rho_1, \rho_2$.

We use some ideas of the proofs of Theorem 1 and the related Lemmas in
\cite{gbm15} to show that system \eqref{modsysred18} generically
defines 18 points in the complex field.

\begin{lemma}
  Generically, there are no values of $z_1,z_2\in\C$ such that the point
  \[
  \widetilde C=(\rho_1'',\rho_2'',z_1,z_2),
  \]
fulfills system \eqref{modsysred18},  where $\rho_1'',\rho_2''$ are defined by
  \eqref{rhojsecondo}.

  \label{lem:deltaLC}
\end{lemma}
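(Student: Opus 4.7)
The plan is to consider system \eqref{modsysred18} with $\rhodot_1, \rhodot_2$ already eliminated via the linear relations \eqref{rhodot1}, \eqref{rhodot2}, specialize $(\rho_1, \rho_2)$ to $(\rho_1'', \rho_2'')$, and show that the resulting four equations in the two unknowns $z_1, z_2$ admit no common solution for a generic choice of data. Concretely, these are $\mathfrak{q}_4 = \mathfrak{q}_5 = \mathfrak{q}_6 = 0$ (the three projections of $\DeltaL = \bzero$) together with $\zeta_2 = 0$, so we are overdetermined in the $z$-plane even after the first two coordinates have been fixed.

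The key structural observation is already present in the excerpt: $\mathfrak{q}_5$ is linear in $z_1$ with leading coefficient $-(\DD_2\cdot\erre_1)$ and independent of $z_2$, while $\mathfrak{q}_6$ is linear in $z_2$ with leading coefficient $\DD_1\cdot\erre_2$ and independent of $z_1$. Generically both coefficients are nonzero at $(\rho_1'', \rho_2'')$, so $\mathfrak{q}_5 = 0$ and $\mathfrak{q}_6 = 0$ determine a unique candidate pair $(z_1^*, z_2^*)$ as rational expressions in the data. The lemma will then follow if substituting $(z_1^*, z_2^*)$ into at least one of $\mathfrak{q}_4$ or $\zeta_2$ produces a non-vanishing quantity; call these residues $\Phi_4(\bm{d})$ and $\Phi_\zeta(\bm{d})$, which after clearing denominators are polynomial functions of the data $\bm{d}$.

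To conclude, I would invoke exactly the specialization/Zariski-density strategy used in the preceding proposition: verify with {\tt Maple 18} that $\Phi_4(\bm{d}_*) \neq 0$ (or, failing that, $\Phi_\zeta(\bm{d}_*) \neq 0$) at the reference data $\bm{d}_*$ of Appendix~\ref{app:data}. Since each $\Phi$ is a polynomial in $\bm{d}$, a single non-vanishing evaluation implies non-vanishing on a Zariski-dense open subset of the data space $\mathcal{D}$, which yields the generic statement.

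The main obstacle is that $\rho_1''$ and $\rho_2''$ are prescribed only implicitly through \eqref{rhojsecondo}, so a direct substitution produces bulky algebraic expressions whose simplification by hand is not feasible. The cleanest implementation is to reduce $\Phi_4$ and $\Phi_\zeta$ modulo the ideal cut out by \eqref{rhojsecondo}, bringing them to a normal form in $\C[\rho_1,\rho_2]/\langle\text{defining polynomials of }\rho_1'',\rho_2''\rangle$ before the numerical evaluation. A secondary check must also confirm that the coefficients $\DD_2\cdot\erre_1$ and $\DD_1\cdot\erre_2$ do not themselves vanish at the specialized point, otherwise the linear solve for $(z_1^*, z_2^*)$ degenerates and a separate argument, analogous to the group~ii) analysis in the preceding proof, would be required.
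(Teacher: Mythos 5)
Your plan is sound and would ultimately succeed, but it takes a genuinely different and much heavier route than the paper, and it has one trap you should be aware of. The paper's proof is purely structural and needs no computer algebra: by Lemma~\ref{lem:pointC}, $C=(\rho_1'',\rho_2'')$ is the point where $\angmom_1=\angmom_2=\bzero$, so there $\mu\widetilde\lenz_j=\erredot_j\times\angmom_j-z_j\erre_j$ collapses to $-z_j\erre_j$ and $\DeltaL=-z_1\erre_1+z_2\erre_2$; projecting onto $\DD_1$ kills the $z_1$ term (since $\erre_1\cdot\DD_1=0$) and gives $\DeltaL\cdot\DD_1=z_2(\rho_2''-\rho_2')\,\erho_1\times\erho_2\cdot\qu_1$, whose coefficient is generically nonzero, forcing $z_2=0$ and contradicting $\zeta_2=0$. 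Your scheme (solve the two linear equations $\mathfrak{q}_5=\mathfrak{q}_6=0$ for $(z_1^*,z_2^*)$, then test a residue at the reference data and spread by Zariski density) is legitimate and consistent with the paper's methodology elsewhere, but note two things. First, the same vanishing of the angular momenta at $C$ implies $\hat{\mathfrak{f}}_4(C)=\hat{\mathfrak{f}}_5(C)=\hat{\mathfrak{f}}_6(C)=0$, hence $z_1^*=z_2^*=0$ and your residue $\Phi_4$ is \emph{identically} zero: your primary check on $\mathfrak{q}_4$ fails for every choice of data, and only the fallback $\Phi_\zeta=(z_2^*)^2|\erre_2''|^2-\mu^2=-\mu^2\neq 0$ closes the argument. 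Had you not included that fallback, the proof would be stuck. Second, as written your proof defers the decisive non-vanishing to an unexecuted {\tt Maple} evaluation, whereas the structural argument gives $\Phi_\zeta=-\mu^2$ for free; also, $\rho_1'',\rho_2''$ are not implicit but explicit rational functions of the data, see \eqref{rhojsecondo}, so no reduction modulo a defining ideal is needed. What your approach buys is uniformity with the specialization strategy of the over-determined cases in Section~\ref{s:diffgen}; what the paper's buys is a two-line contradiction using only $\DeltaL\cdot\DD_1=0$ and $\zeta_2=0$.
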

\begin{proof}
  We can write
  \[
  \mu\widetilde\lenz = \erredot\times\angmom-z\erre.
  \]
  Since $\angmom_1=\angmom_2=\bzero$ at $C=(\rho_1'',\rho_2'')$, see
  Lemma \ref{lem:pointC} in Appendix \ref{app:angmom}, at
  $\widetilde{C}$ we get
  \begin{equation}
    \DeltaL\cdot\DD_1 =
    z_2\erre_2''\cdot\DD_1 =
    z_2(\rho_2''-\rho_2')\erho_1\times\erho_2\cdot\qu_1,
    \label{DeltaL_D1}
  \end{equation}
  where $\erre_2'' = \rho_2''\erho_2 + \qu_2$.
  Since generically $(\rho_2''-\rho_2')\erho_1\times\erho_2\cdot\qu_1$
  is non-vanishing, we would get $z_2=0$, that is in contradiction with
  $\zeta_2=0$.

\end{proof}

\begin{lemma}
  Relation $\DeltaL = \bzero$ is generically equivalent to
  \begin{equation}
  \DeltaL\cdot\erho_1\times\erho_2 = 0\qquad\mbox{and}\qquad
  (\DeltaL\times\erre_1 = \bzero \quad\mbox{or}\quad
  \DeltaL\times\erre_2 = \bzero).
  \label{DeltaL_altern}
  \end{equation}
  \label{lem:DeltaL}
\end{lemma}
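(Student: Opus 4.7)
The forward implication $\DeltaL=\bzero\Rightarrow$ RHS is trivial, since if $\DeltaL=\bzero$ then its scalar product with any vector and its cross product with any vector both vanish. So the substantive content is the reverse direction, which I would prove by treating the two alternatives of the disjunction separately.

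Assume first that $\DeltaL\cdot\erho_1\times\erho_2=0$ and $\DeltaL\times\erre_1=\bzero$. The second relation says $\DeltaL$ is parallel to $\erre_1$, so write $\DeltaL=\lambda\erre_1$ for some $\lambda\in\C$. Substituting into the first relation and using $\erre_1=\qu_1+\rho_1\erho_1$ together with $\erho_1\cdot(\erho_1\times\erho_2)=0$, I obtain
\[
\lambda\,\qu_1\cdot(\erho_1\times\erho_2)=0.
\]
The scalar triple product $\qu_1\cdot(\erho_1\times\erho_2)$ is a polynomial in the data that is not identically zero, since it depends non-trivially on the observer-position coordinates, which are independent parameters. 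Hence on a Zariski-open subset of the data space it is nonzero, forcing $\lambda=0$ and therefore $\DeltaL=\bzero$. The symmetric argument handles the other alternative $\DeltaL\times\erre_2=\bzero$ through the analogous factor $\qu_2\cdot(\erho_1\times\erho_2)$.

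To package this as a generic equivalence I would intersect the two Zariski-open conditions $\qu_j\cdot(\erho_1\times\erho_2)\neq 0$ ($j=1,2$) with the standing generic hypotheses of Section~\ref{s:results} (notably $\DD_1\times\DD_2\cdot\erho_1\times\erho_2\neq 0$), obtaining a Zariski-open dense subset of the data space $\mathcal{D}$ on which the stated equivalence holds. The only real subtlety — and essentially the only thing worth explicit verification — is to confirm that the defect locus is a proper Zariski-closed subset of $\mathcal{D}$; this is immediate from the fact that the two scalar triple products involve the observer positions $\qu_j$ as free parameters, so they are not identically annihilated by the earlier generic assumptions. The algebraic content is a one-line parallelism-plus-coplanarity argument and presents no difficulty.
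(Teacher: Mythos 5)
Your proof is correct and follows essentially the same route as the paper's: parallelism to $\erre_1$ combined with the coplanarity condition forces $\lambda\,\qu_1\cdot\erho_1\times\erho_2=0$, and the triple product is excluded from vanishing generically. The only cosmetic difference is that you introduce the scalar $\lambda$ explicitly, whereas the paper argues via the dichotomy ``$\DeltaL=\bzero$ or $\DeltaL$ nonzero and parallel to $\erre_1$'' (both arguments implicitly rely on $\erre_1\neq\bzero$ generically).
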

\begin{proof}
  Clearly $\DeltaL = \bzero$ implies \eqref{DeltaL_altern}.
  Viceversa, assume $\DeltaL\cdot\erho_1\times\erho_2 = 0$ and
  $\DeltaL\times\erre_1 = \bzero$. The second relation yields $\DeltaL
  = \bzero$ or $\DeltaL$ is nonzero and parallel to $\erre_1$, which
  is generically different from zero. The latter case leads to a
  contradiction because we would have
  \[
  0 =
  \erre_1\cdot\erho_1\times\erho_2 = \qu_1\cdot\erho_1\times\erho_2
  \]
  and the right-hand side is generically different from 0. Thus necessarily $\DeltaL = \bzero$.  The proof with $\DeltaL\times\erre_2
  = \bzero$ works in a similar way.
  
\end{proof}
  
\begin{lemma}
  Assume $q = \zeta_2 = \DeltaL\cdot\erho_1\times\erho_2 = 0$. Then
  $\DeltaL=\bzero$ is generically equivalent to
\begin{equation}
\left\{
\begin{array}{l}
  \DeltaL\cdot\DD_1 = 0,\cr
  \angmom\cdot \erho_1 \neq 0\cr
\end{array}
\right.
\hskip 1cm\mbox{or}\hskip 1cm
\left\{
\begin{array}{l}
  \DeltaL\cdot\DD_2 = 0,\cr
  \angmom\cdot \erho_2 \neq 0,\cr
\end{array}
\right.
\label{altern}
\end{equation}
where $\angmom = \angmom_1 = \angmom_2$ is the common value of the
angular momentum.
\label{lem:equiv}
\end{lemma}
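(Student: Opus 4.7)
The plan is to reduce the equivalence to Lemma~\ref{lem:DeltaL} and to exploit the single algebraic identity $\DeltaL\cdot\angmom = 0$, which is valid whenever $\angmom_1=\angmom_2=\angmom$. This identity holds because $\mu\widetilde{\lenz}_j = \erredot_j\times\angmom_j - z_j\erre_j$ is orthogonal to $\angmom_j$ for each $j$: both $(\erredot_j\times\angmom_j)\cdot\angmom_j$ and $\erre_j\cdot\angmom_j = \erre_j\cdot(\erre_j\times\erredot_j)$ vanish. On $V(J)$ the equality $\angmom_1 = \angmom_2$ follows from $\mathfrak{q}_1 = \mathfrak{q}_2 = \mathfrak{q}_3 = 0$, since the three chosen projections of $\angmom_1-\angmom_2$ generically span $\mathbb{R}^3$.

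For the backward direction, by the symmetry $1\leftrightarrow 2$ it suffices to treat the alternative $\DeltaL\cdot\DD_1 = 0$ and $\angmom\cdot\erho_1\neq 0$. Since $\DD_1 = \erre_1\times\erho_1$, the first condition forces $\DeltaL$ into $\mathrm{span}(\erre_1,\erho_1)$, so I write $\DeltaL = a\,\erre_1 + b\,\erho_1$. Dotting with $\erho_1\times\erho_2$ and using $\erho_1\cdot(\erho_1\times\erho_2) = 0$ together with $\erre_1\cdot(\erho_1\times\erho_2) = \qu_1\cdot(\erho_1\times\erho_2) = \DD_1\cdot\erho_2$, the standing hypothesis $\DeltaL\cdot(\erho_1\times\erho_2) = 0$ collapses to $a\,(\DD_1\cdot\erho_2) = 0$, and the generic non-vanishing of $\DD_1\cdot\erho_2$ forces $a = 0$. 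The identity $\DeltaL\cdot\angmom = 0$ then reduces to $b\,(\angmom\cdot\erho_1) = 0$; the assumption $\angmom\cdot\erho_1\neq 0$ yields $b = 0$, hence $\DeltaL = \bzero$.

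For the forward direction, $\DeltaL = \bzero$ trivially gives $\DeltaL\cdot\DD_j = 0$ for both $j$, so the content of the statement reduces to ruling out the simultaneous vanishing $\angmom\cdot\erho_1 = \angmom\cdot\erho_2 = 0$ at points of $V(J)$. Using $\angmom\cdot\erho_j = -\erredot_j\cdot\DD_j$ (which follows from $\angmom = \erre_j\times\erredot_j$ and $\erre_j\times\erho_j = \DD_j$), this double vanishing imposes two additional polynomial relations on the finitely many points of $V(J)$. The genericity step is the main obstacle, and I would handle it in the style of the preceding proposition: check at the reference data $\bm{d}_*$ of \eqref{q_qdot_dat}--\eqref{sig_tau_dat} that none of the 18 solutions satisfies both relations, observe that the bad locus is cut out by a non-trivial polynomial in $\bm{d}$, and conclude by Zariski density of its complement in ${\cal D}$.
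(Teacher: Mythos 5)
Your backward implication is correct and takes a genuinely different, more direct route than the paper's. The paper argues by contraposition: assuming $\DeltaL\neq\bzero$, it observes that $\DeltaL,\erre_1,\erre_2$ are all orthogonal to $\angmom$, so $(\DeltaL\times\erre_j)\times\angmom=\bzero$; it then invokes Lemma~\ref{lem:DeltaL} to guarantee $\DeltaL\times\erre_j\neq\bzero$ and deduces $\angmom\cdot\erho_j=\kappa_j\,\DeltaL\cdot\DD_j$, so that neither system in \eqref{altern} can hold. Your decomposition $\DeltaL=a\,\erre_1+b\,\erho_1$ (legitimate since generically $\DD_1=\erre_1\times\erho_1\neq\bzero$) reaches the same conclusion without Lemma~\ref{lem:DeltaL}, using only the identity $\DeltaL\cdot\angmom=0$ and the generic non-vanishing of $\DD_1\cdot\erho_2$; both proofs hinge on that same orthogonality identity.

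The gap is in the forward implication. You correctly reduce it to excluding the simultaneous vanishing $\angmom\cdot\erho_1=\angmom\cdot\erho_2=0$, but you then defer this to an unexecuted check at the reference data followed by a Zariski-density argument. This is problematic twice over. First, the set of data for which the augmented affine system admits a solution is the image of a projection and is not automatically Zariski-closed, so the claim that ``the bad locus is cut out by a non-trivial polynomial in $\bm{d}$'' would need the homogenization and Macaulay-resultant machinery used in the proof of the Proposition of Section~\ref{s:diffgen}; as written it is an assertion, not an argument. Second, your reference to ``the 18 solutions'' is circular: the count $\#V(J)=18$ is established in Proposition~\ref{p:18sol} \emph{using} the present lemma. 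The paper closes this step structurally with tools already available: $\angmom\cdot\erho_1=\angmom\cdot\erho_2=0$ is exactly $c_{11}=c_{22}=0$, which by Lemma~\ref{lem:pointC} forces $(\rho_1,\rho_2)=C=(\rho_1'',\rho_2'')$, where $\angmom_1=\angmom_2=\bzero$; there the computation \eqref{DeltaL_D1} gives $\DeltaL\cdot\DD_1=z_2(\rho_2''-\rho_2')\,\erho_1\times\erho_2\cdot\qu_1$, so $\DeltaL=\bzero$ would force $z_2=0$, contradicting $\zeta_2=0$ (Lemma~\ref{lem:deltaLC}). Replacing your genericity sketch with this argument completes the proof without any new computation.
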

\begin{proof}
  Let us assume that neither of the two systems in \eqref{altern} is
  satisfied and prove that $\DeltaL\neq\bm 0$. If
  \[
  \DeltaL\cdot\DD_1\neq 0\qquad \mbox{or}\qquad \DeltaL\cdot\DD_2\neq 0
  \]
  we immediately get the result. Otherwise, if $\DeltaL\cdot\DD_1 =
  \DeltaL\cdot\DD_2 = 0,$ our first assumption implies
  \begin{equation}
    \angmom\cdot\erho_1 = \angmom\cdot\erho_2 = \bm 0.
    \label{eq:cerho}
  \end{equation}
  From Lemma \ref{lem:pointC} in Appendix \ref{app:angmom}, we know
  that the point $C=(\rho_1'',\rho_2'')$ is the only solution of
  \eqref{eq:cerho}, and by Lemma \ref{lem:deltaLC} we cannot find
  values of $z_1,z_2\in\C$ such that the point
  $\widetilde{C}=(\rho_1'',\rho_2'',z_1,z_2)$ is a solution of
  \eqref{modsysred18}.
  
  On the other hand, assuming that $\DeltaL\ne\bm 0$, we can show that
  both systems in \eqref{altern} have no solution. Indeed, since
  $q=0$, we have $\angmom_1=\angmom_2$, so that the vectors $\DeltaL,
  \erre_1, \erre_2$ are coplanar and all orthogonal to $\angmom$.
  Therefore
  \[
  (\DeltaL\times\erre_j)\times\angmom = \bm 0\hskip 1cm j=1,2.
  \]
 By Lemma~\ref{lem:DeltaL}, since we are assuming
 $\DeltaL\cdot\erho_1\times\erho_2=0$ and $\DeltaL\ne\bm 0$,
 generically we have $\DeltaL\times\erre_j \neq \bzero$ for $j=1,2$.
 Thus we obtain, for some scalar functions $\kappa_j$,
    \begin{equation}
      \angmom\cdot\erho_{j} = \kappa_j \DeltaL\times\erre_{j}\cdot\erho_{j} =
      \kappa_j\DeltaL\cdot\erre_{j}\times\erho_{j} =
      \kappa_j\DeltaL\cdot\qu_{j}\times\erho_{j} = \kappa_j\DeltaL\cdot\DD_{j}.
    \label{eq:rel1}
  \end{equation}
  Therefore, both systems in \eqref{altern} are not satisfied.
  
\end{proof}


\begin{lemma}
  Generically, the polynomial system
\begin{equation}
q = \DeltaL\cdot\erho_1\times\erho_2 = \DeltaL\cdot\DD_1 = \zeta_2 = 0
\label{sys20}
\end{equation}
has 20 solutions in the complex field.
  \label{lem20}
\end{lemma}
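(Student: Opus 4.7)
The plan is to eliminate the auxiliary variables $z_1,z_2$ from \eqref{sys20} using the two projections of $\DeltaL$ appearing in the system, thereby reducing the count to the intersection of two plane curves in $(\rho_1,\rho_2)$, to which B\'ezout's theorem applies in $\P^2$. Using the identity $\mu\widetilde{\lenz}_j = \erredot_j\times\angmom_j - z_j\erre_j$ and the vanishing $\erre_1\cdot\DD_1 = 0$ (a scalar triple product with a repeated factor), one has
\[
\DeltaL\cdot\DD_1 \;=\; (\DD_1\cdot\erre_2)\,z_2 \;+\; \mathfrak{f}_6,\qquad
\mathfrak{f}_6 \;=\; (\erredot_1\times\angmom_1 - \erredot_2\times\angmom_2)\cdot\DD_1,
\]
which is affine linear in $z_2$ and independent of $z_1$. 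Assuming generically $\DD_1\cdot\erre_2\ne 0$, I solve $z_2 = -\mathfrak{f}_6/(\DD_1\cdot\erre_2)$ and substitute into $\zeta_2 = z_2^{\,2}|\erre_2|^2 - \mu^2 = 0$; clearing the denominator produces the bivariate polynomial
\[
P(\rho_1,\rho_2) \;:=\; \mathfrak{f}_6^{\,2}\,|\erre_2|^2 \;-\; \mu^2\,(\DD_1\cdot\erre_2)^2.
\]
The remaining equation $\DeltaL\cdot\erho_1\times\erho_2 = 0$ is affine linear in $z_1$ with generically non-vanishing coefficient $-(\DD_1\cdot\erho_2)$, so it determines $z_1$ uniquely from any solution $(\rho_1,\rho_2)$ of $\{q = P = 0\}$.

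The heart of the argument is a sharp total-degree bound $\deg P \le 10$. After replacing $\rhodot_1,\rhodot_2$ by the expressions from \eqref{rhodot1}-\eqref{rhodot2}, which are polynomials of degree at most $2$ in $(\rho_1,\rho_2)$, each vector $\erredot_j$ becomes a polynomial of total degree $\le 2$ in $(\rho_1,\rho_2)$. A priori $\angmom_j = \erre_j\times\erredot_j$ could reach degree $3$, but the potentially cubic contribution $\rhodot_j\,\rho_j\,\erho_j\times\erho_j$ vanishes identically, so $\angmom_j$ also has total degree $\le 2$. Hence $\erredot_j\times\angmom_j$ has total degree $\le 4$, which gives $\deg\mathfrak{f}_6 \le 4$ and therefore $\deg P \le 2\cdot 4 + 2 = 10$. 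B\'ezout's theorem applied to the projective closures of $\{q=0\}$ and $\{P=0\}$ in $\P^2$, of degrees $2$ and $10$, then yields $2\cdot 10 = 20$ intersection points counted with multiplicity.

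To conclude, I need to verify that for generic data all $20$ intersections lie in the affine chart and are simple, so that they correspond to $20$ distinct complex quadruples $(\rho_1,\rho_2,z_1,z_2)$ satisfying \eqref{sys20}. Following the pattern of the preceding Proposition, this amounts to the non-vanishing of a single polynomial function of the data $\bm d\in{\cal D}$ (encoding transversality, the absence of intersections at infinity, and the non-vanishing of $\DD_1\cdot\erre_2$ and $\DD_1\cdot\erho_2$); one verifies this at the specialised value $\bm d = \bm d_*$ by a direct Gr\"obner basis computation, and concludes via Zariski-density of the resulting open subset of ${\cal D}$. The main obstacle is precisely the degree bound $\deg\mathfrak{f}_6 \le 4$: a naive term-by-term estimate using $|\erredot_j|^2\erre_j$ and $(\erredot_j\cdot\erre_j)\erredot_j$ separately would only give $\deg\mathfrak{f}_6 \le 5$, leading to a spurious B\'ezout bound of $24$; the correct count of $20$ emerges only after rewriting $\mu\widetilde{\lenz}_j$ in the compact form $\erredot_j\times\angmom_j - z_j\erre_j$, which makes the cancellation manifest.
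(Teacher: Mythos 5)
Your proof follows essentially the same route as the paper's: eliminate $z_1$ via the (affine-linear) projection onto $\erho_1\times\erho_2$, eliminate $z_2$ from $\DeltaL\cdot\DD_1=\zeta_2=0$ to obtain the degree-10 plane curve (your $P$ coincides with the paper's $\mathscr{P}$), and apply B\'ezout against the conic $q=0$; your explicit justification of $\deg\mathfrak{f}_6\le 4$ through the cancellation in $\erredot\times\angmom$ is a useful addition, since the paper only asserts the total degree 10 and records that cancellation in a later section. One detail in your closing paragraph is off: it is not true that generically all 20 planar intersections are simple --- by the paper's Lemma~\ref{lem:P1sing} the point $P_1=(\rho_1'',\rho_2')$ is a singular point of $\mathscr{P}=0$ lying on $q=0$, hence an intersection of multiplicity at least 2; the count of 20 for the four-variable system is nevertheless correct because at $\rho_2=\rho_2'$ the coefficient $\erre_2\cdot\DD_1$ of $z_2$ vanishes and $\zeta_2=0$ then yields two values $z_2^\pm$, so the fibre over $P_1$ carries the two solutions $\widetilde{P}_1^\pm$.
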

\begin{proof} First note that in \eqref{sys20}
  the unknown $z_1$ appears only in equation
  $\DeltaL\cdot\erho_1\times\erho_2 = 0$, which gives only one value
  of $z_1$ once the values of the other variables are chosen.
  Eliminating $z_2$ from $\DeltaL\cdot\DD_1=\zeta_2=0$ we obtain
  \begin{equation}
    \mathscr{P}(\rho_1,\rho_2) = 0
    \label{Peq0},
  \end{equation}
  where\footnote{we are assuming $\rhodot_1, \rhodot_2$ have been
  eliminated in $\erredot_1, \erredot_2$.}
  \begin{equation}
    \begin{split}
    \mathscr{P}(\rho_1,\rho_2) &= |\erre_2|^2\left[
      -(\erredot_1\cdot\erre_1)(\erredot_1\cdot\DD_1) -
      |\erredot_2|^2(\erre_2\cdot\DD_1) +
      (\erredot_2\cdot\erre_2)(\erredot_2\cdot\DD_1) \right]^2\cr 
    & -\mu^2(\erre_2\cdot\DD_1)^2\cr
    \end{split}
    \label{defP} 
  \end{equation}
  is a polynomial with total degree 10 in $\rho_1, \rho_2$.
%
%
Using Bezout's theorem \cite{CLO1} with
equations $\mathscr{P} = 0$ and $q=0$,
we can state that system \eqref{sys20} has generically 20 solutions in
the complex field.

\end{proof}

\begin{lemma}
The point $P_1\equiv (\rho_1'',\rho_2')$ is a singular point
for the algebraic planar curve $\mathscr{P}(\rho_1,\rho_2) = 0$.
\label{lem:P1sing}
\end{lemma}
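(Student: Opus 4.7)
The plan is to verify the three defining conditions for $P_1$ to be a singular point of the planar curve $\mathscr{P}=0$, namely $\mathscr{P}(P_1)=0$, $\partial_{\rho_1}\mathscr{P}(P_1)=0$, and $\partial_{\rho_2}\mathscr{P}(P_1)=0$. The entire argument hinges on the difference-of-squares shape
\[
\mathscr{P} \;=\; |\erre_2|^2\,B^2 \;-\; \mu^2\,A^2,
\]
where $A:=\erre_2\cdot\DD_1$ and $B$ denotes the bracket appearing in \eqref{defP}. Because $\erre_2=\rho_2\erho_2+\qu_2$ depends only on $\rho_2$, both $|\erre_2|^2$ and $A$ are univariate polynomials in $\rho_2$; in particular $\partial_{\rho_1}|\erre_2|^2=0$ and $\partial_{\rho_1}A=0$. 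This observation will drastically simplify the $\rho_1$-partial derivative of $\mathscr{P}$.

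First I would establish the two scalar identities $A(P_1)=0$ and $B(P_1)=0$. The first is immediate from the defining property of $\rho_2'$, characterised as the unique root of the linear polynomial $\erre_2\cdot\DD_1$. For the second, I would use that $\rho_1''$ comes (via Lemma \ref{lem:pointC}) from the condition $\angmom\cdot\erho_1=0$, which, after the identity $\angmom_1\cdot\erho_1=-\erredot_1\cdot\DD_1$, unfolds to $\qudot_1\cdot\DD_1+\rho_1(\eort_1\cdot\DD_1)=0$; crucially this is a single-variable condition on $\rho_1$ because $\rhodot_1$ does not enter $\erredot_1\cdot\DD_1$. Substituting at $P_1$ kills the first summand of $B$ (via $\erredot_1\cdot\DD_1=0$) and the second summand (via $A=0$), leaving only the residual third summand $(\erredot_2\cdot\erre_2)(\erredot_2\cdot\DD_1)$.

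The main obstacle is to show that this residual term also vanishes at $P_1$. My plan is to substitute the expression for $\rhodot_2$ coming from the elimination $\mathfrak{q}_3=0$, and to use that $A(P_1)=0$ forces $\erre_2$ to lie in the plane spanned by $\erho_1,\qu_1$ (the plane orthogonal to $\DD_1$): this degeneracy should cause the relevant scalar triple products entering $\erredot_2\cdot\erre_2$ and $\erredot_2\cdot\DD_1$ to become proportional and collapse against each other once $\rho_1''$ is imposed. This is the only step that requires genuine bookkeeping; the rest of the argument is formal.

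With $A(P_1)=B(P_1)=0$ in hand, the remaining verifications are routine. Substitution gives $\mathscr{P}(P_1)=0$ at once. For the $\rho_1$-partial, the $\rho_2$-only dependence of $|\erre_2|^2$ and $A$ yields
\[
\partial_{\rho_1}\mathscr{P} \;=\; 2\,|\erre_2|^2\, B\,(\partial_{\rho_1}B),
\]
which vanishes at $P_1$ because $B(P_1)=0$. For the $\rho_2$-partial,
\[
\partial_{\rho_2}\mathscr{P} \;=\; (\partial_{\rho_2}|\erre_2|^2)\,B^2 \;+\; 2\,|\erre_2|^2\,B\,(\partial_{\rho_2}B) \;-\; 2\mu^2\, A\,(\partial_{\rho_2}A),
\]
every summand carries a factor of $A$ or $B$ and hence vanishes at $P_1$. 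All three singularity conditions are then met, and $P_1$ is a singular point of the curve $\mathscr{P}=0$.
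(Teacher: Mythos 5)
Your overall skeleton is sound and coincides with the paper's: write $\mathscr{P}=|\erre_2|^2B^2-\mu^2A^2$ with $A=\erre_2\cdot\DD_1$ and $B$ the bracket in \eqref{defP}, prove $A(P_1)=B(P_1)=0$, and then the vanishing of $\mathscr{P}$ and of both partial derivatives at $P_1$ is purely formal. Your verification of $A(P_1)=0$, and your elimination of the first two summands of $B$ (via $\erredot_1\cdot\DD_1=0$ at $\rho_1=\rho_1''$ and via $A=0$), are correct.

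However, there is a genuine gap at precisely the step you yourself flag as ``the only step that requires genuine bookkeeping'': you never prove that the residual term $(\erredot_2\cdot\erre_2)(\erredot_2\cdot\DD_1)$ vanishes at $P_1$, and the mechanism you sketch for it would not work. The coplanarity of $\erre_2$ with $\erho_1,\qu_1$ implied by $A(P_1)=0$ gives no information about $\erredot_2$, and the two scalar products will not ``collapse against each other'' by substituting $\rhodot_2(\rho_1,\rho_2)$ alone: the missing ingredient is that $P_1$ lies on the conic $q=0$, a fact your outline never invokes. The paper's argument runs as follows: since $q(P_1)=0$ one has $\angmom_1=\angmom_2$ at $P_1$ (the radial velocities having been eliminated), hence $\DeltaL\cdot\angmom_1=0$ there; moreover $\angmom_1\times\DD_1=(\angmom_1\cdot\erho_1)\,\erre_1=-(\rho_1-\rho_1'')(\erho_1\times\eort_1\cdot\qu_1)\,\erre_1$ vanishes at $\rho_1=\rho_1''$, so $\DD_1$ is parallel to $\angmom_1\neq\bzero$ at $P_1$. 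Together these give $\DeltaL\cdot\DD_1=0$ at $P_1$ for every value of $z_2$, and since $B=\DeltaL\cdot\DD_1-z_2A$ (the $z_2$ terms cancel identically), one concludes $B(P_1)=0$ without ever isolating the third summand. If you insist on your term-by-term route, the same parallelism yields $\erredot_2\cdot\DD_1\propto\erredot_2\cdot\angmom_2=0$, which is exactly the identity you are missing --- but it still requires $q(P_1)=0$. As written, the proposal is therefore incomplete at its central point.
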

\begin{proof}
  We show that
  \begin{equation}
  \mathscr{P}(P_1) = \mathscr{P}_{\rho_1}(P_1) = \mathscr{P}_{\rho_2}(P_1) = 0,
  \label{P1sing}
  \end{equation}
  where the subscripts $\rho_1, \rho_2$ indicate derivatives with
  respect to these variables.

  First note that we can write \eqref{Peq0} as
\begin{equation}
|\erre_2|^2\left[ \DeltaL\cdot\DD_1 -
  z_2(\erre_2\cdot\DD_1) \right]^2 -
\mu^2(\erre_2\cdot\DD_1)^2 = 0,
\label{ls_equiv}
\end{equation}
where actually all the terms with $z_2$ cancel out.
From 
\begin{equation}
\erre_2\cdot\DD_1 = \rho_2 \qu_1\cdot\erho_1\times\erho_2 +
\erho_1\cdot\qu_1\times\qu_2 = (\rho_2 -
\rho_2')\erho_1\times\erho_2\cdot\qu_1
\label{r2_D1}
\end{equation}
we obtain that
\[
\erre_2\cdot\DD_1 = 0
\]
at $\rho_2=\rho_2'$.
Since $q(P_1) = 0$, see Appendix \ref{app:angmom}, we have
\begin{equation}
\DeltaL\cdot\angmom_1 = 0
\label{DeltaL_c}
\end{equation}
at $P_1$, whatever the value of $z_2$. From relation
\[
\begin{split}
\angmom_1\times\DD_1 &= \angmom_1\times(\erre_1\times\erho_1) =
(\angmom_1\cdot\erho_1)\erre_1 =
(\erredot_1\cdot\erho_1\times\erre_1)\erre_1\cr
& = (\erredot_1\cdot\erho_1\times\qu_1)\erre_1 = \bigl((\rho_1\eort_1
+ \qudot_1)\cdot\erho_1\times\qu_1\bigr)\erre_1 \cr
&= -(\rho_1-\rho_1'')(\erho_1\times\eort_1\cdot \qu_1)\erre_1
\end{split}
\]
we have that
\begin{equation}
  \angmom_1\times\DD_1=\bzero
  \label{cxD1null}
\end{equation}
at $P_1$.  Since $\angmom_1\neq \bzero$ at $P_1$, using
\eqref{DeltaL_c} and \eqref{cxD1null} we conclude that
relation
\[
\DeltaL\cdot\DD_1 = 0
\]
holds when we evaluate $(\rho_1,\rho_2)$ at $P_1$, whatever the value
of $z_2$.  In each term of the derivatives $\mathscr{P}_{\rho_1},
\mathscr{P}_{\rho_2}$ either $\erre_2\cdot\DD_1$ or
$\DeltaL\cdot\DD_1$ appears, and at $P_1$ both these terms and
therefore both derivatives vanish.

\end{proof}


  By Lemma \ref{lem:P1sing} the point $P_1=(\rho_1'',\rho_2')$ fulfills
  $\mathscr{P} = q = 0$.
For $\rho_2=\rho_2'$ the variable $z_2$ disappears in $\DeltaL\cdot\DD_1$. We use $\zeta_2 = \DeltaL\cdot\erho_1\times\erho_2=0$ to define
\[
z_2^\pm = \pm\frac{\mu}{\sqrt{|\erre_2(\rho_2')|}}
\]
and 
\[
z_1^\pm =
\frac{1}{\erho_2\cdot\DD_1}\left(\hat{\mathfrak{f}}_4(\rho_1'',
\rho_2') - (\erho_1\cdot\DD_2)z_2^\pm \right),
\]
where $\hat{\mathfrak{f}}_4$ is the function obtained by eliminating
$\rhodot_1, \rhodot_2$ in $\mathfrak{f}_4$.  With these definitions
the two points
\[
\widetilde{P}_1^\pm = (\rho_1'',\rho_2',z_1^\pm, z_2^\pm)
\]
are particular solutions of \eqref{sys20}.


\begin{proposition}
  Generically, system \eqref{modsysred18} has 18 solutions in the
  complex field, that is
  \[
  \#V(J)=18.
  \]
\label{p:18sol}
\end{proposition}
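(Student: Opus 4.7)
The plan is to apply Lemma \ref{lem20}, which asserts that system \eqref{sys20} has 20 solutions generically, and to show that exactly two of these -- the points $\widetilde{P}_1^\pm$ -- fail to satisfy \eqref{modsysred18}. Since $\DeltaL = \bzero$ trivially implies $\DeltaL \cdot \DD_1 = 0$, every solution of \eqref{modsysred18} is automatically a solution of \eqref{sys20}, so identifying precisely two spurious points yields $\#V(J) = 20 - 2 = 18$.

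To characterize the spurious points, I invoke Lemma \ref{lem:equiv}: a solution of \eqref{sys20} lies in $V(J)$ iff one of the two alternatives in \eqref{altern} holds. Because $\DeltaL \cdot \DD_1 = 0$ is one of the sys20 equations, the first alternative holds unless $\angmom \cdot \erho_1 = 0$, i.e. unless $\rho_1 = \rho_1''$ (from the identity in Lemma \ref{lem:P1sing} that $\angmom_1 \times \DD_1 = -(\rho_1 - \rho_1'')(\erho_1 \times \eort_1 \cdot \qu_1)\erre_1$). I then show that a sys20 solution with $\rho_1 = \rho_1''$ must also satisfy $\rho_2 = \rho_2'$: writing $\mu\widetilde\lenz_j = \erredot_j \times \angmom_j - z_j\erre_j$ and noting that $\angmom_1 = \angmom_2 = \angmom$ holds along $\mathfrak{q}_1 = \mathfrak{q}_2 = \mathfrak{q}_3 = 0$, together with the identities $\angmom \times \DD_1 = \bzero$ at $\rho_1 = \rho_1''$ and $\erre_1 \cdot \DD_1 = 0$, one obtains
\[
\DeltaL \cdot \DD_1 = z_2 \,(\erre_2 \cdot \DD_1) = z_2\,(\rho_2 - \rho_2')\,\erho_1 \times \erho_2 \cdot \qu_1,
\]
and since $\zeta_2 = 0$ forces $z_2 \neq 0$ generically, this gives $\rho_2 = \rho_2'$, hence $(\rho_1,\rho_2) = P_1$. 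At $P_1$ the two signs in $\zeta_2 = 0$ combined with $\DeltaL \cdot \erho_1 \times \erho_2 = 0$ (linear in $z_1$ with coefficient $\erho_2 \cdot \DD_1 \neq 0$ generically) produce exactly the two lifts $\widetilde{P}_1^\pm$ already defined in the text.

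The technical crux is verifying that the second alternative in \eqref{altern} also fails at $\widetilde{P}_1^\pm$, so that $\DeltaL \neq \bzero$ there. Since $\rho_2' \neq \rho_2''$ generically, we have $\angmom \cdot \erho_2 \neq 0$ at $\rho_2 = \rho_2'$, and the second alternative collapses to $\DeltaL \cdot \DD_2 = 0$. Using $\angmom \times \DD_2 = (\angmom \cdot \erho_2)\erre_2$ and $\erre_2 \cdot \DD_2 = 0$, a direct computation at $P_1$ yields
\[
\DeltaL \cdot \DD_2 = (\angmom \cdot \erho_2)\,(\erredot_1 - \erredot_2) \cdot \erre_2 - z_1\,(\erre_1 \cdot \DD_2).
\]
If both $\widetilde{P}_1^+$ and $\widetilde{P}_1^-$ gave $\DeltaL \cdot \DD_2 = 0$, subtracting the two equations would yield $(z_1^+ - z_1^-)(\erre_1 \cdot \DD_2) = 0$; since $z_2^+ \neq z_2^-$ generically and $z_1$ depends linearly on $z_2$ with nonzero coefficient, $z_1^+ \neq z_1^-$, forcing $\erre_1 \cdot \DD_2 = 0$, hence $\rho_1 = \rho_1'$, contradicting $\rho_1'' \neq \rho_1'$. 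Ruling out vanishing at just one of the two lifts follows from the same Macaulay-resultant / Zariski-density machinery used in Section \ref{s:diffgen}, since $\DeltaL \cdot \DD_2$ evaluated at $\widetilde{P}_1^\pm$ is a polynomial function of the data $\bm{d}$ that is not identically zero. Hence $\widetilde{P}_1^\pm \notin V(J)$, and $\#V(J) = 18$.
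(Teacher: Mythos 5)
Your proof is correct and follows essentially the same route as the paper: both arguments count the 20 solutions of \eqref{sys20} via Lemma \ref{lem20}, identify $\widetilde{P}_1^\pm$ as the only solutions lying over $P_1$, and use Lemma \ref{lem:equiv} together with the generic non-vanishing of $\DeltaL(\widetilde{P}_1^\pm)\cdot\DD_2$ to conclude that exactly those two points must be discarded. The only difference is organizational — you characterize membership in $V(J)$ among the 20 points in a single pass, and you supply somewhat more explicit computations (e.g.\ the formula for $\DeltaL\cdot\DD_2$ at $P_1$) than the paper, which splits the count into separate lower and upper bounds.
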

\begin{proof}
First we show that, assuming
$q=\DeltaL\cdot\erho_1\times\erho_2=\zeta_2=0$, the system on the left
in \eqref{altern}
has generically 18 solutions.  In fact,
we find here all the equations of system \eqref{sys20} plus relation
$\angmom\cdot\erho_1\neq 0$. By Lemma \ref{lem20}, system
\eqref{sys20} has 20 solutions. However, the two solutions
$\widetilde{P}_1^\pm = (\rho_1'',\rho_2',z_1^\pm, z_2^\pm)$ must be
discarded because of condition $\angmom\cdot\erho_1\neq 0$, which
excludes $P_1$. Generically, there are no other solutions with
$\rho_2=\rho_2'$ and we are left with 18 solutions. By Lemma
\ref{lem:equiv}, in the solutions of \eqref{modsysred18} we have to
take into account also the ones satisfying the system on the right in
\eqref{altern}, therefore we obtain that \eqref{modsysred18} has at
least 18 solutions.


Now we show that system \eqref{modsysred18} has at most 18
solutions.  We need to add equation $\DeltaL\cdot\DD_2 = 0$ to system
\eqref{sys20} that, as we have said, has 20 solutions.
To conclude, it is enough to note that
\[
\DeltaL(\widetilde{P}_1^\pm)\cdot\DD_2 \neq 0
\]
for some choice of the data.
In fact, relation $\DeltaL(\widetilde{P}_1^\pm)\cdot\DD_2=0$ defines a
closed set for the Zariski topology, whose complementary set is not
empty and therefore dense in the set of the data.


\end{proof}

\section{A Gr\"obner basis for $J$}
\label{s:groebner}

We consider the polynomial ring
\[
\C[\rho_1,\rho_2,\rhodot_1,\rhodot_2,z_1,z_2]
\]
with the lex monomial ordering, where
\[
\rhodot_1 \succ \rhodot_2 \succ z_1 \succ z_2 \succ \rho_1 \succ \rho_2,
\]
and the ideal
\[
\begin{split}
J &= \langle\angmom_1-\angmom_2, \ \mu(\widetilde{\lenz}_1 -
\widetilde{\lenz}_2), \ \zeta_2
\rangle
%
%
= \langle \mathfrak{q}_1, \ldots, \mathfrak{q}_6, \mathfrak{q}_9 \rangle.
\end{split} 
\]
We prove the following result.
\begin{proposition}
  For a generic choice of the data, we can find a Gr\"obner basis
  of the ideal $J$
of the form
\begin{equation*}
\{\rhodot_1+\mathfrak{h}_1, \ \rhodot_2+\mathfrak{h}_2, \
z_1+\mathfrak{h}_3, \ z_2+\mathfrak{h}_4, \ \rho_1 + \mathfrak{h}_5, \
\mathsf{p}_{18}\}
\end{equation*}
where $\mathfrak{h}_k = \mathfrak{h}_k(\rho_2)$, $k=1,\ldots,6$ and
$\mathsf{p}_{18}$ are univariate polynomials.  Generically we have
\[
\textrm{deg}(\mathfrak{h}_k)\leq 17 \quad (1\leq k\leq 5),\qquad
\textrm{deg}(\mathsf{p}_{18})= 18.
\]
  \label{p:groebner}
\end{proposition}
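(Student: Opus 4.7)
The plan is to combine an explicit block-elimination driven by the structure of the generators of $J$ with the shape lemma for zero-dimensional radical ideals in shape position, and then transfer the resulting structural statement from the specialized data $\bm{d}_*$ to a Zariski-dense set of data via the specialization principle for Gröbner bases.

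First I would exploit the linearity of $\mathfrak{q}_2$ in $\rhodot_1$ and of $\mathfrak{q}_3$ in $\rhodot_2$, whose leading coefficient $|\DD_1\times\DD_2|^2$ is generically non-zero, to produce elements $\rhodot_1+\widetilde{\mathfrak{h}}_1, \rhodot_2+\widetilde{\mathfrak{h}}_2\in J$ in which $\rhodot_1,\rhodot_2$ have been eliminated. After substituting these, the polynomials $\mathfrak{q}_4,\mathfrak{q}_5,\mathfrak{q}_6$ become linear in $z_1,z_2$ with a coefficient matrix built from the triple products $\DD_j\cdot\erho_k$ and $\DD_j\cdot\erre_k$ that is generically non-degenerate; solving this linear system yields representatives $z_j+\widetilde{\mathfrak{h}}_{j+2}(\rho_1,\rho_2)\in J$ for $j=1,2$. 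Plugging these into $\mathfrak{q}_1=q(\rho_1,\rho_2)$ and $\mathfrak{q}_9=\zeta_2$ produces two polynomials in $(\rho_1,\rho_2)$, from which a univariate polynomial in $\rho_2$ can be extracted by taking the resultant with respect to $\rho_1$, producing a candidate for $\mathsf{p}_{18}$ and, after polynomial division, a relation $\rho_1+\widetilde{\mathfrak{h}}_5(\rho_2)\in J$.

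The key structural input is Proposition~\ref{p:18sol}, which states $\#V(J)=18$ generically. Provided $J$ is radical and the projection $V(J)\to\C$, $(\rho_1,\rho_2,z_1,z_2,\rhodot_1,\rhodot_2)\mapsto\rho_2$, is injective, the shape lemma forces the reduced lex Gröbner basis of $J$ to have exactly the claimed form, with $\deg(\mathsf{p}_{18})=\#V(J)=18$ and $\deg(\mathfrak{h}_k)\leq 17$ for $k=1,\ldots,5$ by reduction modulo $\mathsf{p}_{18}$. The degree bounds in the proposition then follow automatically from the shape structure.

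The main obstacle is to verify the shape-position hypotheses. Both the radicality of $J$ and the distinctness of the 18 values of $\rho_2$ on $V(J)$ are encoded by the squarefreeness of $\mathsf{p}_{18}$, itself a Zariski-open condition on the parameter space $\mathcal{D}$ of admissible data $\bm{d}=(\mathcal{A}_1,\mathcal{A}_2,\qu_1,\qu_2,\qudot_1,\qudot_2)$, detected by the non-vanishing of the discriminant of $\mathsf{p}_{18}$. To show this open set is non-empty I would, following the methodology already employed in the proof of the Proposition in Section~\ref{s:diffgen}, compute the reduced lex Gröbner basis of the specialized ideal at $\bm{d}=\bm{d}_*$ with {\tt Maple 18}, verify directly that it has the claimed shape with $\mathsf{p}_{18}$ squarefree of degree $18$, and then appeal to the specialization principle for Gröbner bases \cite[Ch.\ 6.3]{CLO1} to conclude that the same structural form persists over a Zariski-dense open subset of $\mathcal{D}$, yielding the proposition.
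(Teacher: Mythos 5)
Your proposal reaches the same conclusion by a genuinely different route. The paper does \emph{not} invoke the shape lemma: it constructs the Gr\"obner basis by hand through a long cascade of polynomial divisions (producing remainders $\mathfrak{r}_0,\ldots,\mathfrak{r}_6$, then $\mathfrak{p}_1,\mathfrak{p}_2,\mathfrak{p}_3$, then a Euclidean-type sequence ending in $\mathfrak{s}_{10}=M\rho_1+\mathfrak{v}_{10}(\rho_2)$ and univariate polynomials of degrees $58$, $38$, $30$), tracking Newton's polygons at each step to control degrees, and only at the very end uses $\#V(J)=18$ from Proposition~\ref{p:18sol} to conclude that every univariate element of $J\cap\C[\rho_2]$ is divisible by a degree-$18$ polynomial $\mathsf{p}_{18}$; genericity of the attained degrees is then certified by a computation at $\bm{d}_*$. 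Your shape-lemma-plus-specialization argument is shorter and more conceptual, and it is sound once two points are tightened. First, you cannot literally ``solve the linear system'' for $z_1,z_2$ to get polynomial elements $z_j+\widetilde{\mathfrak{h}}_{j+2}\in J$: the coefficients $\DD_2\cdot\erre_1$ and $\DD_1\cdot\erre_2$ are linear in $\rho_1$ and $\rho_2$ respectively (see \eqref{r2_D1}), so the naive solve produces rational expressions; obtaining elements monic in $z_1,z_2$ is exactly what the paper's divisions $\mathfrak{r}_0,\ldots,\mathfrak{r}_3$ accomplish. This does not hurt you, since the shape lemma makes the explicit construction unnecessary. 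Second, squarefreeness of $\mathsf{p}_{18}$ alone does not encode radicality of $J$ (consider $\langle\rho_1^2,\rho_2\rangle$, whose eliminant $\rho_2$ is squarefree of degree $\#V=1$); you additionally need $\dim_\C\C[\bm{x}]/J=18$, which you can get by combining the verified value $18$ at $\bm{d}_*$ with upper semicontinuity of the colength under specialization and the lower bound $\dim\geq\#V(J)=18$ from Proposition~\ref{p:18sol}. What your approach buys is brevity and a clean separation between the structural statement and a single certified computation; what the paper's approach buys is the explicit intermediate data (Newton polygons, the factorization $\mathsf{p}_{18}=\mathsf{p}_{20}/(\rho_2-\rho_2')^2$) that Section~\ref{s:approxgcd} actually uses to compute $\mathsf{p}_{18}$ numerically via resultants and the DFT.
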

\begin{proof}
  Using \eqref{rhodot1} and \eqref{rhodot2} we can substitute the
  expressions $\rhodot_j = \rhodot_j(\rho_1,\rho_2)$ into
  $\mathfrak{q}_4, \mathfrak{q}_5, \mathfrak{q}_6$ and obtain the
  polynomials
  \begin{eqnarray*}
    \hat{\mathfrak{q}}_4 &=& -(\DD_1\cdot\erho_2)z_1 - (\DD_2\cdot\erho_1)z_2 + \hat{\mathfrak{f}}_4,\\
    \hat{\mathfrak{q}}_5 &=& -(\DD_2\cdot\erre_1)z_1 + \hat{\mathfrak{f}}_5,\\
    \hat{\mathfrak{q}}_6 &=& (\DD_1\cdot\erre_2)z_2 + \hat{\mathfrak{f}}_6,
  \end{eqnarray*}
  where $\hat{\mathfrak{f}}_j = \hat{\mathfrak{f}}_j(\rho_1,\rho_2)$
  for $j=4,5,6$. Thus, we can reduce to the elimination ideal
  \[
  K_1 = \langle \mathfrak{q}_1, \hat{\mathfrak{q}}_4, \hat{\mathfrak{q}}_5,
  \hat{\mathfrak{q}}_6, \mathfrak{q}_9 \rangle \subset \C[z_1,z_2,\rho_1,\rho_2].
  \]
  Since in the difference
      \[
|\erredot|^2\erre - (\erredot\cdot\erre)\erredot
\]
the term $\rhodot^2\rho\erho$ cancels out, using also $\erho\cdot\DD=0$
we can check that $\hat{\mathfrak{f}}_4, \hat{\mathfrak{f}}_5,
\hat{\mathfrak{f}}_6$ share the same Newton's polygon, which is given
in Figure \ref{newt1}(a).
  \begin{figure}
    \centering
    \includegraphics[width=\linewidth]{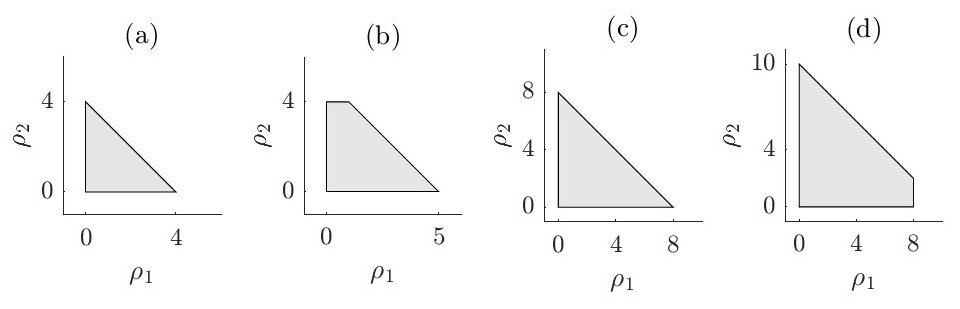}
    \caption{Newton's polygons of some polynomials in the proof. (a)
      $\hat{\mathfrak{f}}_4$, $\hat{\mathfrak{f}}_5$,
      $\hat{\mathfrak{f}}_6$. (b) $\mathfrak{w}_0$.  (c)
      $\mathfrak{w}_1$. (d) $\mathfrak{w}_2$, $\mathfrak{w}_3$.}
    \label{newt1}
  \end{figure}
  
  Consider then the division of $\hat{\mathfrak{q}}_5$ by
  $\hat{\mathfrak{q}}_4$ and let $\mathfrak{r}_0$
  be the remainder, that is obtained by inserting
  \[
  z_1 = -\frac{\DD_2\cdot\erho_1}{\DD_1\cdot\erho_2}z_2
  + \frac{\hat{\mathfrak{q}}_4}{\DD_1\cdot\erho_2}
  \]
  into $\hat{\mathfrak{q}}_5$.
 %
  Generically, we have
  \begin{equation}
  \mathfrak{r}_0 = z_2(A\rho_1+B) + \mathfrak{w}_0,
  \label{r0}
  \end{equation}
  for some coefficients $A,B$ depending only on the data and a
  polynomial $\mathfrak{w}_0(\rho_1,\rho_2)$ whose Newton's polygon is
  given in Figure \ref{newt1}(b).

  Substituting $\hat{\mathfrak{q}}_5$ with
  $\mathfrak{r}_0$
  in the
  generators of $K_1$, the variable $z_1$ appears only in
  $\hat{\mathfrak{q}}_4$ and we can restrict to the elimination
  ideal
  \[
  K_2 = \langle \mathfrak{q}_1, \mathfrak{r}_0, \hat{\mathfrak{q}}_6,
  \mathfrak{q}_9 \rangle \subset \C[z_2,\rho_1,\rho_2].
  \]
  \medbreak
  Next, we perform multi-polynomial division of $\mathfrak{q}_9$ by
  $\mathfrak{r}_0, \hat{\mathfrak{q}}_6$ and denote by $\mathfrak{r}_1$ the
  remainder. Generically, we have
  \begin{equation}
  \mathfrak{r}_1 = Cz_2^2 +Dz_2 + \mathfrak{w}_1
  \label{r1}
  \end{equation}
  for some coefficients $C,D$ and a polynomial $\mathfrak{w}_1 =
  \mathfrak{w}_1(\rho_1,\rho_2)$.  In fact, none of the terms of the
  remainder $\mathfrak{r}_1$ can be divisible by the leading monomial
  of $\mathfrak{r}_0$ or $\hat{\mathfrak{q}}_6$, and
  \[
  LM(\mathfrak{r}_0) = z_2\rho_1, \qquad LM(\hat{\mathfrak{q}}_6) =
  z_2\rho_2 .
  \] 
  To compute Newton's polygon of $\mathfrak{w}_1$, displayed in
  Figure~\ref{newt1}(c), we can proceed in this way: first we
  iteratively insert
    \begin{equation}
      z_2\rho_2 = -\frac{\DD_1\cdot\qu_2}{\DD_1\cdot\erho_2}z_2 -
      \frac{\hat{\mathfrak{f}}_6}{\DD_1\cdot\erho_2}
      \label{z2rho2}
\end{equation}
    into \[ \mathfrak{q}_9 = z_2^2(\rho_2^2 + 2\erho_2\cdot\qu_2\rho_2
    + |\qu_2|^2) - \mu^2
    \]
    and get
    \[
    Ez_2^2 + Fz_2\hat{\mathfrak{f}}_6 + G\hat{\mathfrak{f}}_6^2 + H
    \]
    for some constants $E,F,G,H$. Then transform $z_2\hat{\mathfrak{f}}_6$ by iteratively inserting either
    \eqref{z2rho2} or
    \[
z_2\rho_1 = -\frac{B}{A}z_2 - \frac{\mathfrak{w}_0}{A},
    \]
    which comes from \eqref{r0}, and check that Newton's polygon of
    $\mathfrak{w}_1$ is the same as $\hat{\mathfrak{f}}_6^2$.
 
  We add $\mathfrak{r}_1$ to the generators of
  $K_2$. Since $\mathfrak{r}_1$ corresponds to the remainder of the
  division between polynomials in $K_2$, adding $\mathfrak{r}_1$ does
  not make the ideal $K_2$ any larger.

\medbreak Let $\mathfrak{r}_2$ be the remainder of the division of
$\mathfrak{q}_9$ by $\mathfrak{r}_1$.
Using \eqref{r1}, we insert
\[
z_2^2 = -\frac{D}{C}z_2 - \frac{\mathfrak{w}_1}{C}
\]
into $\mathfrak{q}_9$ and find that
\[
\mathfrak{r}_2 = z_2(I\rho_2^2 + J\rho_2 + K) + \mathfrak{w}_2,
\]
for some constants $I,J,K$ and a polynomial
$\mathfrak{w}_2(\rho_1,\rho_2)$, whose Newton's polygon is shown in
Figure~\ref{newt1}(d).
We can substitute $\mathfrak{q}_9$ with $\mathfrak{r}_2$ in the set of
generators of $K_2$, so that
\[
K_2 = \langle \mathfrak{q}_1, \mathfrak{r}_0, \hat{\mathfrak{q}}_6,
\mathfrak{r}_1, \mathfrak{r}_2 \rangle \subset \C[z_2,\rho_1,\rho_2].
\]

Next, divide $\mathfrak{r}_2$ by $\hat{\mathfrak{q}}_6$ and let
$\mathfrak{r}_3$ be the remainder. We have
\[
\mathfrak{r}_3 = Lz_2 + \mathfrak{w}_3,
\]
for a constant $L$ and a polynomial $\mathfrak{w}_3(\rho_1,\rho_2)$
whose Newton's polygon is the same as $\mathfrak{w}_2$.
We substitute $\mathfrak{w}_3$ to $\mathfrak{r}_2$ in the set of
generators:
\[
K_2=\langle \mathfrak{q}_1, \mathfrak{r}_0,
\hat{\mathfrak{q}}_6, \mathfrak{r}_1, \mathfrak{r_3} \rangle \subset
\C[z_2,\rho_1,\rho_2].
\]
Dividing the polynomials $\mathfrak{r}_0, \hat{\mathfrak{q}}_6, 
\mathfrak{r}_1$ by $\mathfrak{r}_3$, and denoting by
$\mathfrak{r}_4, \mathfrak{r}_5, \mathfrak{r}_6$ the corresponding
remainders, that depend only on $\rho_1, \rho_2$, we obtain
\[
K_2 = \langle \mathfrak{q}_1, \mathfrak{r}_3, \mathfrak{r}_4,
\mathfrak{r}_5, \mathfrak{r}_6 \rangle \subset \C[z_2,\rho_1,\rho_2].
\]
Newton's polygons of $\mathfrak{r}_4, \mathfrak{r}_5,
  \mathfrak{r}_6$ are shown in Figures~\ref{newt2}(a),(c),(e).
    \begin{figure}
      \centering
      
              \includegraphics[width=\linewidth]{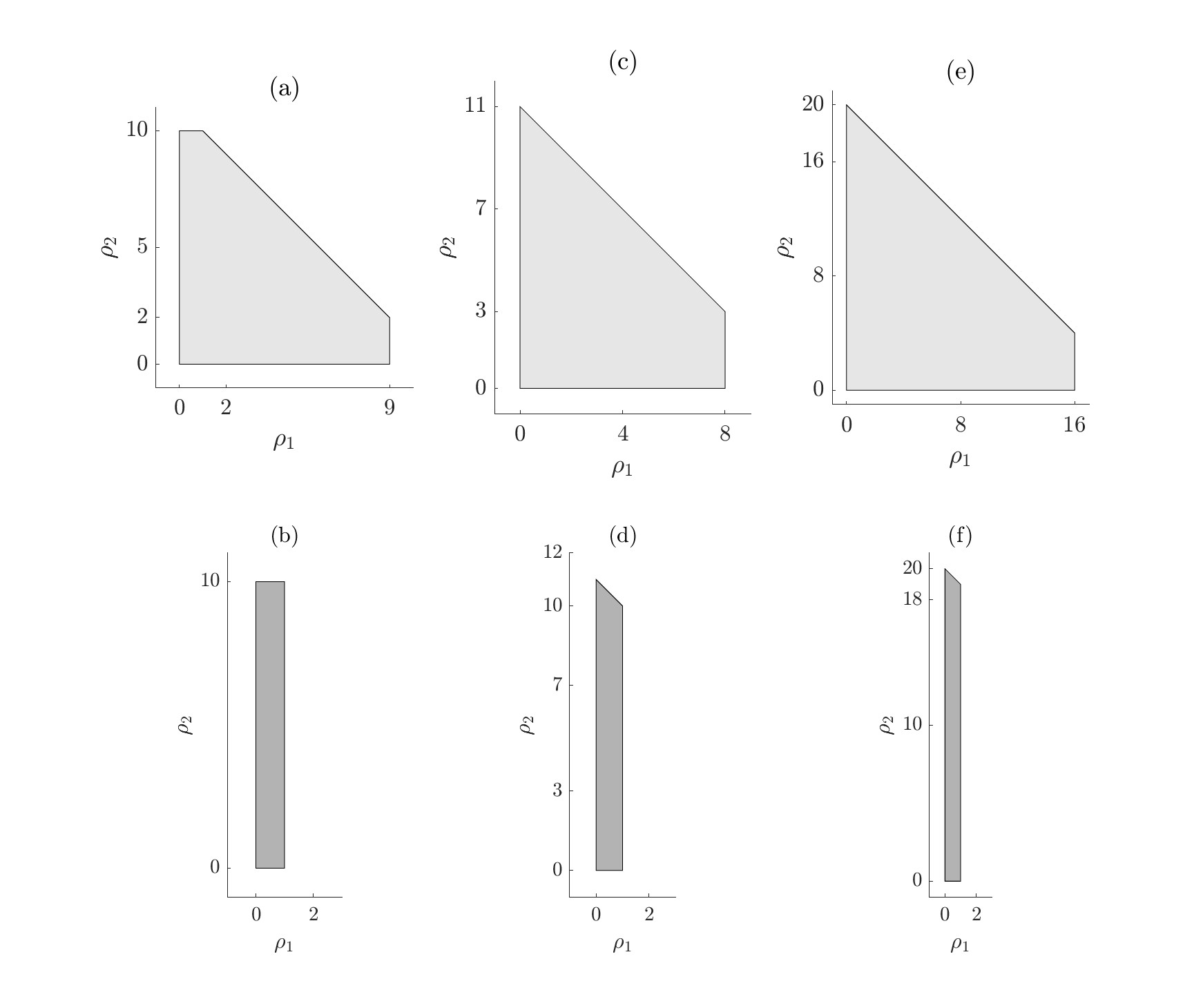}
      \caption{Newton's polygons of some polynomials in the proof. (a)
        $\mathfrak{r}_4$. (b) $\mathfrak{p}_1$. (c)
        $\mathfrak{r}_5$. (d) $\mathfrak{p}_2$. (e) $\mathfrak{r}_6$.
        (f) $\mathfrak{p}_3$.}
    \label{newt2}
  \end{figure}
  Note that $z_2$ appears only in $\mathfrak{r}_3$, thus we
  can
  restrict to the elimination ideal
  \[
  K_3 = \langle \mathfrak{q}_1, \mathfrak{r}_4, \mathfrak{r}_5,
  \mathfrak{r}_6 \rangle \subset \C[\rho_1,\rho_2].
  \]
  Now we divide $\mathfrak{r}_4, \mathfrak{r}_5, \mathfrak{r}_6$ by
  $\mathfrak{q}_1$ and call $\mathfrak{p}_1, \mathfrak{p}_2,
  \mathfrak{p}_3$ the remainders, so that
  \[
  K_3 = \langle \mathfrak{q}_1, \mathfrak{p}_1,
  \mathfrak{p}_2, \mathfrak{p}_3 \rangle.
  \]
  Generically, we get
\begin{equation}
\mathfrak{p}_k = \rho_1\xi_k(\rho_2) + \eta_k(\rho_2), \qquad k=1,2,3
\label{pk_details}
\end{equation}
      whose Newton's polygons are displayed in Figures~\ref{newt2}(b),(d),(f). 
\medbreak    
  Next, we consider a sequence of divisions
  \begin{equation}
    \begin{aligned}
      \mathfrak{p}_3 &= \mathcal{Q}_1\mathfrak{p}_1 + \mathfrak{s}_1,\\
      \mathfrak{p}_1 &= \mathcal{Q}_2\mathfrak{s}_1 + \mathfrak{s}_2,\\
      \mathfrak{s}_{k-1} &= \mathcal{Q}_{k+1}\mathfrak{s}_k + \mathfrak{s}_{k+1}, \qquad 2\leq k\leq 9\\
    \end{aligned}
    \label{eqn:serie_div}
  \end{equation}
where $\mathcal{Q}_{k} = \mathcal{Q}_{k}(\rho_2)$
with $\mathrm{deg}(\mathcal{Q}_1) = \mathrm{deg}(\xi_3) -
\mathrm{deg}(\xi_1), \ \mathrm{deg}(\mathcal{Q}_k) = 1 \
(k\geq 2)$,
that allows us to obtain polynomials
\[
\mathfrak{s}_i = \rho_1\mathfrak{u}_i(\rho_2) + \mathfrak{v}_i(\rho_2), \qquad 1\leq i\leq 10
\]
with
$\mathfrak{u}_i$, $\mathfrak{v}_i$ defined by
\[
\begin{array}{l}
  \xi_3 = \mathcal{Q}_1 \xi_1 + \mathfrak{u}_1,\cr
  \xi_1 = \mathcal{Q}_2 \mathfrak{u}_1 + \mathfrak{u}_2,\cr
  \mathfrak{u}_{k-1} = \mathcal{Q}_{k+1}\mathfrak{u}_{k} +
  \mathfrak{u}_{k+1}$, $2\leq k\leq 9, \cr
\end{array}
\hskip 1cm
\begin{array}{l}
  \eta_3 = \mathcal{Q}_1 \eta_1 + \mathfrak{v}_1,\cr
  \eta_1 = \mathcal{Q}_2 \mathfrak{v}_1 + \mathfrak{v}_2,\cr
  \mathfrak{v}_{k-1} = \mathcal{Q}_{k+1}\mathfrak{v}_{k} +
  \mathfrak{v}_{k+1}$, $2\leq k\leq 9. \cr
\end{array}
\]
Knowing that
\[
\mathrm{deg}(\mathfrak{u}_1)=9, \qquad \mathrm{deg}(\mathfrak{v}_1)=20,
\]
the relations above imply
\[
\begin{split}
  &\mathrm{deg}(\mathfrak{u}_{k+1}) = \mathrm{deg}(\mathfrak{u}_{k})-1, \hskip 1cm \mathrm{deg}(\mathfrak{u}_{10})=0,\cr
  &\mathrm{deg}(\mathfrak{v}_{k+1}) = \mathrm{deg}(\mathfrak{v}_{k})+1, \hskip 1cm  \mathrm{deg}(\mathfrak{v}_{10})=29,\cr
\end{split}
\]
so that
\[
\mathfrak{s}_{10} = M\rho_1 + \mathfrak{v}_{10}(\rho_2),
\]
for some constant $M$.

We selected $\mathfrak{p}_1,\mathfrak{p}_3$ to start the procedure
in \eqref{eqn:serie_div} because generically the two univariate
polynomials $\mathfrak{\xi}_1, \mathfrak{\xi}_3$ in \eqref{pk_details}
are relatively prime.
This claim is based on the computation of
$\mathrm{gcd}(\mathfrak{\xi}_1,\mathfrak{\xi}_3)$ with the data of
Appendix~\ref{app:data}. Since $\mathrm{gcd}(\mathfrak{\xi}_1,
\mathfrak{\xi}_3) = 1$ for the selected data, the resultant
$\mathrm{res}(\mathfrak{\xi}_1, \mathfrak{\xi}_3,\rho_2)$ evaluated at
these data is nonzero.
Now we write
\[
K_3 = \langle \mathfrak{q}_1, \mathfrak{p}_2, \mathfrak{s}_9,
\mathfrak{s}_{10}\rangle.
\]
Dividing $\mathfrak{q}_1, \mathfrak{p}_2, \mathfrak{s}_9$ by
$\mathfrak{s}_{10}$ we obtain remainders $\mathfrak{z}_{1},
\mathfrak{z}_{2}, \mathfrak{z}_{3}$, that are univariate polynomials
in $\rho_2$ with
\[
\mathrm{deg}(\mathfrak{z}_{1}) = 58,\qquad
\mathrm{deg}(\mathfrak{z}_{2}) = 38,\qquad
\mathrm{deg}(\mathfrak{z}_{3}) = 30.
\]



Recall that $V(J)$ has generically 18 points, as proved in Proposition
\ref{p:18sol}. From the construction made in this
section, the values of $\rhodot_1, \rhodot_2, z_1, z_2, \rho_1$ are
univocally determined by the value of $\rho_2$. Therefore, we
generically expect that all the univariate polynomials in $\rho_2$
contained in the ideal $J$ must be divisible by a polynomial of degree
18, that we call $\mathsf{p}_{18}$. Therefore, we can write
\[
K_3 = \langle \mathfrak{s}_{10}, \mathsf{p}_{18} \rangle.
\]

  Going back to the original ideal $J$, we can then construct the
  Gr\"obner basis
  \[
  \{\mathfrak{q}_2, \mathfrak{q}_3, \hat{\mathfrak{q}}_4, \mathfrak{r}_3,
  \mathfrak{s}_{10}, \mathsf{p}_{18}\},
  \]
  which can be easily transformed into a Gr\"obner basis of the form
  \eqref{groebner18}.

Evaluating the coefficients of the generators of $J$ at the data of
Appendix \ref{app:data}, we checked with {\tt Mathematica}
\cite{Mathematica} that the degrees of the polynomials considered in
this section are indeed generically attained.
  
\end{proof}


\section{Approximate gcd}
\label{s:approxgcd}

In Section~\ref{s:diffgen} we saw that including several conservation
laws can make the corresponding polynomial system inconsistent. In
case of inconsistency, even if the linkage is correct, we cannot find
any solution. On the other hand, using as many conservation laws as
possible can help to compute solutions with better accuracy, or to
discard a proposed linkage. For this reason we would like to be able
to deal also with inconsistent polynomial systems by searching for
approximate solutions.

We consider here the ideal generated by
\[
\{\mathfrak{q}_1,\ldots\mathfrak{q}_7,\mathfrak{q}_9\}
\]
which is generically empty, see Table~\ref{t:cardV}, therefore we
expect that the two univariate polynomials $\mathsf{p}_9$ and
$\mathsf{p}_{18}$, obtained by elimination from
\[
\langle \mathfrak{q}_1,\ldots\mathfrak{q}_7 \rangle, \hskip 1cm
\langle \mathfrak{q}_1,\ldots\mathfrak{q}_6,\mathfrak{q}_9 \rangle
\]
respectively, are relatively prime.
Then we try to compute compromise solutions of equations
\begin{equation}
\mathfrak{q}_1 = \ldots = \mathfrak{q}_7 = \mathfrak{q}_9 = 0
\label{q1q7q9}
\end{equation}
by searching for an approximate gcd of $\mathsf{p}_9$ and
$\mathsf{p}_{18}$, which provides us with compromise roots of these
two polynomials.  There are different methods to define and compute
approximate gcd of polynomial systems, see e.g. \cite{Boito,
  bini_boito, Corless1995}.  In this section we describe the procedure
that we have followed.

The first step is the computation of the coefficients of the
$\mathsf{p}_{18}$ polynomial. The system
\[
\begin{cases}
  \mathscr P(\rho_1, \rho_2) = 0\\
  q(\rho_1, \rho_2) = 0
\end{cases},
\]
with $\mathscr{P}$ defined as in \eqref{defP}, is similar to the one
derived in \cite{gfd11}: the only difference is that we use the
projection on $\DD_1$ instead of $\DD_2$. The resultant of
$\mathscr{P}$ and $q$ with respect to $\rho_1$ is a univariate
polynomial in $\rho_2$ of degree 20, that we denote by
$\mathsf{p}_{20}$. Then, using the result of Proposition
\ref{p:18sol}, the polynomial $\mathsf{p}_{18}$ can be written as
\begin{equation}
  \mathsf{p}_{18} =
  \displaystyle\frac{\mathsf{p}_{20}}{(\rho_2-\rho_2')^2}.
  \label{poly18}
\end{equation}
We can compute the coefficients of $\mathsf{p}_{18}$ by
evaluating the right-hand side of \eqref{poly18} at the $32^{nd}$
roots of unity $\omega_h = \exp(i\frac{h}{32})$, $h=0,\ldots,31$, by
the direct Fourier transform (DFT). The values of the coefficients of
$\mathsf{p}_{18}$ are given by its inverse (IDFT), like in
\cite{gdm10,gfd11}.

To get approximate solutions of \eqref{q1q7q9} we compute the
approximate gcd (or $\varepsilon$-gcd) of $\mathsf{p}_9$ and
$\mathsf{p}_{18}$.
For more details concerning the $\varepsilon$-gcd see \cite{bini_boito},
\cite{stetter}. We apply the procedure described in
\cite{Corless1995}, that we briefly recall here for completeness. Let
$m$ and $n$ be the degrees of the two polynomials, in our case $m=9$
and $n=18$. First, we consider the Sylvester matrix
$S\in\mathbb{R}^{(m+n)\times(m+n)}$ of $\mathsf{p}_9$ and
$\mathsf{p}_{18}$ and compute its singular value decomposition:
\[
S=U\Sigma V^T,
\]
where $U$ and $V$ are orthogonal matrices and $\Sigma$ is diagonal:
\[
\Sigma = \mathrm{diag}\{\sigma_1,\ldots,\sigma_{m+n}\},
\]
where the $\sigma_i$ are the singular values of $S$.

Suppose that $\sigma_1\ge\dots\ge\sigma_{m+n}$. Setting an error
tolerance $\varepsilon$, we find the maximum integer $k$ such that
\[
\sigma_k>\varepsilon\sqrt{m+n}\hskip 1cm\mathrm{and}\hskip 1cm
\sigma_{k+1}\le\varepsilon.
\]
If $k$ cannot be determined, i.e. if there is no such gap between two
consecutive singular values of $S$, the determination of the
$\varepsilon$-gcd fails. Otherwise, the numerical $\varepsilon$-rank
of $S$ is $k$ and we look for a polynomial
\[
d=\text{$\varepsilon$-gcd}{(\mathsf{p}_9,\mathsf{p}_{18})}
\]
of degree
\[
n_d=m+n-k.
\]
We compute $d$ by iterating the standard
Euclidean division algorithm for polynomials, like for the computation
of the usual gcd, but stopping the procedure when the degree of the
remainder is $n_d$.

Once the $\varepsilon$-gcd has been computed, we solve the polynomial
equation
\begin{equation}
  d(\rho_2)=0.
  \label{eq:gcd_roots}
\end{equation}
Let $\rho_2^{(i)}$, $i=1,\ldots,\ell$, be the real and positive solutions of
\eqref{eq:gcd_roots}. For each $\rho_2^{(i)}$ we have different ways
to compute the corresponding value of $\rho_1$. Indeed, we could use
for example
\begin{equation}
  \rho_1 + \mathfrak{g}_5(\rho_2)=0
  \label{eq:rho1g5}
\end{equation}
or
\[
  \rho_1 + \mathfrak{h}_5(\rho_2)=0.
\]
We chose instead to compute $\rho_1$ by solving
\begin{equation}
  q(\rho_1,\rho_2^{(i)}) = 0,
  \label{eq:qrho1}
\end{equation}
in order to obtain pairs $(\rho_1^{(i)},\rho_2^{(i)})$ that satisfy
$\angmom_1=\angmom_2$. However, equation \eqref{eq:qrho1} gives two
possible solutions $\rho_{1,1}^{(i)},\rho_{1,2}^{(i)}$. When both
solutions are real and positive we choose the one that minimizes
\[
|\rho_1 + \mathfrak{g}_2(\rho_2^{(i)})|
\]
Recall that
\[
\mathfrak{g}_5(\rho_2) = \alpha\tilde{a}_{1,0} + \beta\tilde{a}_{2,0},
\]
where $\tilde{a}_{1,0}, \tilde{a}_{2,0}$ are defined as in
Appendix~\ref{app:poly9}, and $\alpha, \beta$ are univariate polynomials
in $\rho_2$ such that
\[
\alpha\tilde{a}_{1,1} + \beta\tilde{a}_{2,1} = 1, 
\]
which exists because generically we have
\[
\mathrm{gcd}(\tilde{a}_{1,1},\tilde{a}_{2,1}) = 1,
\]
see \cite{gbm17}.
Then $\rhodot_1, \rhodot_2$ are found from $\mathfrak{q}_2 =
\mathfrak{q}_3 = 0$, see \eqref{rhodot1}, \eqref{rhodot2}, and a
preliminary orbit is computed.


\section{Numerical tests}

We present two test cases for the algorithm introduced in
Section~\ref{s:approxgcd}. In the first test we see how this method
can improve both preliminary orbits computed with $\mathsf{p}_9$ and
$\mathsf{p}_{18}$. In the second test we show how incompatible VSAs
can be discarded.

\begin{paragraph}{Test 1}
We consider the asteroid 2005 TF$_{181}$. From the set of
observations available for this object at
\url{https://newton.spacedys.com/astdys2/}, we select two tracklets,
one made of four observations taken on 16$^{th}$ April, 2024 from
the Pan-STARRS 2 observatory, the other consisting of four
observations taken on 26$^{th}$ May, 2024 from the Pan-STARRS 1
observatory and compute the corresponding two attributables $\mathcal
A_1, \mathcal A_2$ at times $t_1, t_2$. The real and positive roots of
the $\mathsf{p}_9$ polynomial in this case are
\[
\rho_2^{(1)}=0.496,\qquad \rho_2^{(2)}=2.396,\qquad
\rho_2^{(3)}=5.480,
\]
while those of $\mathsf{p}_{18}$ are
\[
\rho_2^{(1)}=0.873,\qquad \rho_2^{(2)}=2.722,\qquad
\rho_2^{(3)}=3.030.
\]
Applying the procedure described in Section
\ref{s:approxgcd}, with $\varepsilon=10^{-4}$, we obtain an $\varepsilon$-gcd
of $\mathsf{p}_9$ and $\mathsf{p}_{18}$ of degree
3, with the only real positive root
\[
\rho_2^* = 2.415.
\]
Computing the corresponding value of $\rho_1^*$ as described above we
get
\[
\rho_{1,1} = 2.253, \qquad \rho_{1,2} = -3.935.
\]

Selecting $\rho_1^* = \rho_{1,1}$ since it is the only positive
solution, we can compute $\dot\rho_2^*$ from
\[
\dot\rho_2 = \frac{(\bm
  J(\rho_1,\rho_2)\times\DD_1)\cdot(\DD_1\times\DD_2)}{|\DD_1\times\DD_2|^2},
\]
with $\bm J, \DD_1, \DD_2$ defined in Appendix \ref{app:angmom}.
Combining the values $\rho_2^*,\dot\rho_2^*$ with the attributable
$\mathcal A_2$, we can derive the orbital elements of the preliminary
orbit at time $t_2$. In Table \ref{t:orbits} we can see the orbital
elements of the orbits computed with the methods of \cite{gfd11} and
\cite{gbm15} and with the $\varepsilon$-gcd method. Furthermore, we
display the orbital elements of asteroid 2005 TF$_{181}$ provided by
the AstDyS catalog.

\begin{table}
  \begin{center}
  \begin{tabular}{c|c|c|c|c|c|c|c}
    &$a$ & $e$ & $i$ & $\Omega$ & $\omega$ & $\ell$ & $t$\\
    \hline
    $\mathsf{p}_{18}$ &3.027 & 0.034 & 9.299  & 26.626 & 38.919 & 217.856 & 60800 \\
    $\mathsf{p}_9$ &3.028 & 0.034 & 9.299  & 26.626 & 38.114 & 218.718 & 60800 \\
    $\varepsilon$-gcd &3.072 & 0.106 & 10.451 & 27.374 & 56.214 & 190.688 & 60800 \\
    known &3.063 & 0.135 & 10.759 & 27.618 & 58.215 & 185.612 & 60800
  \end{tabular}
  \end{center}
  \caption{Orbital elements of test case 2005 TF$_{181}$. On the first line
    the orbital elements computed with the method of \cite{gfd11}; on
    the second line those computed with the method of \cite{gbm15}; on
    the third line those obtained with the $\varepsilon$-gcd presented in
    this paper; on the fourth line the known orbital elements of the
    asteroid. Angles are expressed in degrees, times in MJD.}
  \label{t:orbits}
\end{table}
\end{paragraph}

\begin{paragraph}{Test 2}
To test the possibility of using the $\varepsilon$-gcd method to
filter out wrong linkages of tracklets, we consider two tracklets that
do not belong to the same asteroid. In particular, we take a first
tracklet belonging to asteroid 2005 TK$_{161}$, made of four
observations taken on 18$^{th}$ February, 2024 from Pan-STARRS 2, and
as second tracklet the first one used in the previous test, belonging
to asteroid 2005 TF$_{181}$. In this case,
there are two possible roots of $\mathsf{p}_9$:
\[
\rho_{2,1} = 1.654, \qquad \rho_{2,2} = 1.424.
\]
The solution $\rho_{2,2}$ would be selected as the best solution,
since it gives a lower value of the $\chi_4$ norm defined in
\cite{gbm15}. From this we would get the preliminary orbit 
\[
\begin{aligned}
  a &= 3.081, \qquad &e &= 0.235, \qquad &i &= 4.967,\\
  \Omega &= 21.250, \qquad &\omega &= 161.461, \qquad &\ell &= 18.362
\end{aligned}
\]
at time $t_2$.  Applying the method of Section \ref{s:approxgcd}, we
compute an $\varepsilon$-gcd between $\mathsf{p}_9$ and
$\mathsf{p}_{18}$ of degree 2, whose only real and positive root is
\[
\rho_2^* = 0.101.
\]
However, with this value of $\rho_2$, we are unable to get a
corresponding real and positive solution of
\eqref{eq:qrho1}. Therefore, having no acceptable values for $\rho_1$,
the linkage between these two tracklets with the $\varepsilon$-gcd
method rightly fails.
\end{paragraph}

\section{Acknowledgements}
This study was carried out within the Space It Up project funded by
the Italian Space Agency, ASI, and the Ministry of University and
Research, MUR, under contract n. 2024-5-E.0 - CUP n. I53D24000060005.

\section{Appendix}
\begin{appendix}

\section{Some results in Algebraic Geometry}
\label{app:AlgGeom}

We recall some definitions and results that has been used in this paper.
  
\begin{definition}
  Definition (5.6) in
  \cite[Chapter 3.5]{CLO2}: A property is
  said to {\em hold generically} for polynomials $f_1,\ldots,f_n$
  if there is a nonzero polynomial in
  the coefficients of the $f_i$ such that the property holds for all
  $f_1,\ldots f_n$ for which that polynomial is non-vanishing.
\label{def:generic}
\end{definition}
The definition above is based on Zariski topology, see
\cite{CLO1}. Recall that a non-empty Zariski open set is
Zariski-dense.

In our problem the coefficients of the polynomials $\mathfrak{q}_i$
depend on the data
\[
  \bm{d} = \{\Att_1,\Att_2,\qu_1,\qudot_1,\qu_2,\qudot_2\}.
  \]
  The dependence on $\alpha_j, \delta_j$ appear only through the sine
  or cosine of these angles. Taking as angular data these
  trigonometric functions instead of the angles themselves, the
  dependence of the coefficients on the data becomes polynomial.  In
  this way, the relation in the coefficients of Definition
  \ref{def:generic} is a polynomial relation $P(\bm{d})=0$ in the
  data.

  Endowing the set ${\cal D}$ of the data with the Zariski topology,
  if a property of the polynomials $f_1,\ldots,f_n$ fails exactly when
  $P(\bm{d})=0$ (a Zariski-closed set), and for a choice of the data,
  say $\bm{d}_*$, we have $P(\bm{d}_*)\neq 0$, then the open set
  \[
  \{\bm{d}\in{\cal D}: P(\bm{d})\neq 0\}
  \]
is not empty, and therefore is Zariski-dense.
  
We also recall the two following results, which are used in
Section~\ref{s:diffgen}.  \medbreak
\begin{theorem}
  {\em (Macaulay resultant, see \cite[Chapter 3.2]{CLO2})}. If
  $F_0,\ldots,F_n\in\C[x_0,\ldots,x_n]$
  with
  \[
  F_i = \sum_{|\alpha|=d_i}c_{i,\alpha}x^\alpha
  \]
  are homogeneous polynomials of positive degrees $d_0,\ldots,d_n$,
  with coefficients $c_{i,\alpha}$, then there is a unique polynomial
  $\mathrm{Res}\in \Z[c_{i,\alpha}]$ which has the following
  property:\\ the equations
  \[
  F_0 = \ldots = F_n = 0
  \]
  have a nontrivial solutions over $\C$ if and only if
  \[
  \mathrm{Res}(F_0,\ldots,F_n) = 0.
  \]
\label{t:Macaulay}
\end{theorem}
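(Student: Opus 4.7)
The plan is to follow an elimination strategy that mirrors the structure of the six unknowns, peeling off variables one layer at a time according to the lex ordering, and then invoke Proposition~\ref{p:18sol} to pin down the degrees at the final stage.

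First I would use the linear equations $\mathfrak{q}_2=\mathfrak{q}_3=0$ in \eqref{rhodot1}--\eqref{rhodot2} to solve explicitly for $\rhodot_1, \rhodot_2$ as polynomial expressions in $\rho_1,\rho_2$; substituting these into $\mathfrak{q}_4,\mathfrak{q}_5,\mathfrak{q}_6$ yields polynomials $\hat{\mathfrak{q}}_4,\hat{\mathfrak{q}}_5,\hat{\mathfrak{q}}_6 \in \C[z_1,z_2,\rho_1,\rho_2]$ that are linear in $z_1,z_2$. This reduces the problem to analyzing the elimination ideal $K_1 = \langle \mathfrak{q}_1,\hat{\mathfrak{q}}_4,\hat{\mathfrak{q}}_5,\hat{\mathfrak{q}}_6,\mathfrak{q}_9\rangle \subset \C[z_1,z_2,\rho_1,\rho_2]$, where a key observation is that the ``source'' polynomials $\hat{\mathfrak{f}}_4,\hat{\mathfrak{f}}_5,\hat{\mathfrak{f}}_6$ share a common Newton polygon (since the $\rhodot^2\rho\erho$ terms in $|\erredot|^2\erre-(\erredot\cdot\erre)\erredot$ cancel).

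Next I would eliminate $z_1$ by reducing $\hat{\mathfrak{q}}_5$ modulo $\hat{\mathfrak{q}}_4$, then eliminate $z_2$ by a sequence of reductions of $\mathfrak{q}_9$ modulo the $z_2$-linear polynomials, keeping careful track of leading monomials under the lex ordering and of Newton polygons of the remainders. The successive remainders $\mathfrak{r}_0,\mathfrak{r}_1,\mathfrak{r}_2,\mathfrak{r}_3$ progressively lower the $z_2$-degree, and after dividing the remaining $z_2$-linear generators by $\mathfrak{r}_3$ one lands in the elimination ideal $K_3 \subset \C[\rho_1,\rho_2]$ generated by $\mathfrak{q}_1$ and three polynomials $\mathfrak{p}_1,\mathfrak{p}_2,\mathfrak{p}_3$ that are linear in $\rho_1$ with coefficients $\xi_k(\rho_2),\eta_k(\rho_2)$. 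The genericity hypothesis enters here by requiring that certain resultants and leading coefficients do not vanish on the chosen data, which can be checked for $\bm d_*$ and then extended by density of the Zariski-open complement of the exceptional locus.

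To eliminate $\rho_1$ I would run a Euclidean-style remainder sequence on the univariate $\xi_1,\xi_3$ with quotients $\mathcal{Q}_k(\rho_2)$ as in \eqref{eqn:serie_div}; the crucial point is that generically $\gcd(\xi_1,\xi_3)=1$, which is a Zariski-open condition verifiable at $\bm d_*$. The tracking of degrees through the $\mathfrak u_k,\mathfrak v_k$ recursion produces $\mathfrak{s}_{10} = M\rho_1 + \mathfrak{v}_{10}(\rho_2)$ with $\deg(\mathfrak{v}_{10})=29$. Dividing the remaining bivariate generators by $\mathfrak{s}_{10}$ yields purely univariate remainders $\mathfrak{z}_1,\mathfrak{z}_2,\mathfrak{z}_3 \in \C[\rho_2]$ of controlled degrees.

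Finally I would invoke Proposition~\ref{p:18sol}: since $V(J)$ has exactly $18$ points generically and each point's other coordinates are determined by $\rho_2$ (because $\rho_1,z_1,z_2,\rhodot_1,\rhodot_2$ have been expressed rationally in terms of $\rho_2$ along the reduction), the univariate elimination ideal $J \cap \C[\rho_2]$ must be generated by a polynomial $\mathsf{p}_{18}$ of degree exactly $18$, and $\mathsf{p}_{18}$ must divide $\mathfrak{z}_1,\mathfrak{z}_2,\mathfrak{z}_3$ and $\mathfrak{v}_{10}$. Back-substituting $\mathsf{p}_{18}$ through the chain of rational expressions for $\rho_1$, $z_2$, $z_1$, $\rhodot_2$, $\rhodot_1$ and clearing denominators against $\mathsf{p}_{18}$ produces generators $\rho_1+\mathfrak{h}_5,\ z_2+\mathfrak{h}_4,\ z_1+\mathfrak{h}_3,\ \rhodot_2+\mathfrak{h}_2,\ \rhodot_1+\mathfrak{h}_1$ with $\mathfrak{h}_k=\mathfrak{h}_k(\rho_2)$. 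The degree bound $\deg(\mathfrak{h}_k)\leq 17$ follows because these polynomials are reduced modulo $\mathsf{p}_{18}$ in the lex Gröbner normal form. Verification that the resulting set satisfies Buchberger's criterion is routine once all leading monomials — $\rhodot_1,\rhodot_2,z_1,z_2,\rho_1,\rho_2^{18}$ — are pairwise coprime, which they are by construction.

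The main obstacle is the bookkeeping of Newton polygons and degrees through the many reductions (steps producing $\mathfrak{r}_0,\ldots,\mathfrak{r}_6$ and then $\mathfrak{s}_1,\ldots,\mathfrak{s}_{10}$), and verifying that at each step the relevant leading coefficients are nonzero for generic data so that divisions behave as claimed. These are all Zariski-open conditions, hence the simultaneous verification at the specific data $\bm d_*$ (using \texttt{Mathematica}, as stated at the end of the proposition) extends to a generic statement, but the combinatorial accounting is what demands the most care.
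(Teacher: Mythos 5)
Your proposal does not address the statement at all. The statement to be proved is Theorem~\ref{t:Macaulay}, the classical existence and uniqueness of the Macaulay (multivariate) resultant: for $n+1$ homogeneous polynomials $F_0,\ldots,F_n$ in $n+1$ variables of prescribed positive degrees, there is a unique integer polynomial $\mathrm{Res}$ in their coefficients whose vanishing is equivalent to the existence of a nontrivial common zero over $\C$. This is a result of classical elimination theory (the paper does not prove it; it is quoted from \cite[Chapter 3.2]{CLO2}), and a proof would involve, e.g., showing that the set of coefficient vectors for which a nontrivial common zero exists is an irreducible hypersurface in the projective space of coefficients (via the incidence variety and projection), establishing irreducibility and the normalization that makes $\mathrm{Res}$ unique up to sign, and verifying integrality of the coefficients.

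What you have written instead is a sketch of the proof of Proposition~\ref{p:groebner}: the elimination chain $\mathfrak{q}_2,\mathfrak{q}_3 \to \hat{\mathfrak{q}}_4,\hat{\mathfrak{q}}_5,\hat{\mathfrak{q}}_6 \to \mathfrak{r}_0,\ldots,\mathfrak{r}_6 \to \mathfrak{p}_1,\mathfrak{p}_2,\mathfrak{p}_3 \to \mathfrak{s}_{10} \to \mathsf{p}_{18}$, the Newton-polygon bookkeeping, and the appeal to Proposition~\ref{p:18sol} to identify the degree-18 generator of $J\cap\C[\rho_2]$. None of that machinery bears on the existence or uniqueness of the Macaulay resultant; at most, the Macaulay resultant is \emph{used} elsewhere in the paper (in the proof of Proposition~1 in Section~\ref{s:diffgen}) as a tool to pass from emptiness of a projective variety at specific data to emptiness for generic data. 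You should either supply a genuine proof of the resultant theorem or, as the paper does, cite it as a known result.
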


\begin{theorem}
  {\em (The Projective Weak Nullstellensatz, see \cite[Chapter
    8.3]{CLO1})}. Let $I$
  be a homogeneous ideal in $\C[x_0,\ldots,x_n]$. Then the following
  statements are equivalent:
\begin{itemize}
\item[i)] $V(I)\subset \P^n(\C)$ is empty; 
\item[ii)] Let $G$ be a reduced Gr\"obner basis for $I$ with respect to
  some monomial ordering. Then for each variable $x_i\ (i=0,\ldots,n)$
  there is $g\in G$ such that $LT(g)$ is a non negative power of $x_i$.
\end{itemize}
\label{projweakNSS}
\end{theorem}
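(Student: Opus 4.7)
The plan is to construct the basis by an explicit cascade of eliminations matching the lex order: first $\rhodot_1,\rhodot_2$, then $z_1$, then $z_2$, then $\rho_1$, tracking the $\rho_1,\rho_2$-Newton polygons at each stage and finally invoking Proposition~\ref{p:18sol} to identify the univariate generator. First, since $\mathfrak{q}_2$ and $\mathfrak{q}_3$ are linear in $\rhodot_1$ and $\rhodot_2$ with coefficient $|\DD_1\times\DD_2|^2 \neq 0$ (see \eqref{rhodot1}, \eqref{rhodot2}), I would solve them to write $\rhodot_j = \rhodot_j(\rho_1,\rho_2)$ and substitute into $\mathfrak{q}_4,\mathfrak{q}_5,\mathfrak{q}_6$, obtaining polynomials $\hat{\mathfrak{q}}_4,\hat{\mathfrak{q}}_5,\hat{\mathfrak{q}}_6$ linear in $z_1,z_2$. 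This passes to the elimination ideal $K_1 = \langle\mathfrak{q}_1,\hat{\mathfrak{q}}_4,\hat{\mathfrak{q}}_5,\hat{\mathfrak{q}}_6,\mathfrak{q}_9\rangle \subset \C[z_1,z_2,\rho_1,\rho_2]$. Dividing $\hat{\mathfrak{q}}_5$ by $\hat{\mathfrak{q}}_4$ (whose $z_1$-coefficient $-(\DD_1\cdot\erho_2)$ is generically nonzero) eliminates $z_1$ outside $\hat{\mathfrak{q}}_4$, producing a remainder $\mathfrak{r}_0 = z_2(A\rho_1+B)+\mathfrak{w}_0$ and hence the further elimination ideal $K_2 = \langle\mathfrak{q}_1,\mathfrak{r}_0,\hat{\mathfrak{q}}_6,\mathfrak{q}_9\rangle \subset \C[z_2,\rho_1,\rho_2]$.

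The next step is to peel off $z_2$ through successive remainders. Since $LM(\mathfrak{r}_0) = z_2\rho_1$ and $LM(\hat{\mathfrak{q}}_6) = z_2\rho_2$ are coprime, multi-polynomial division of $\mathfrak{q}_9$ by $\{\mathfrak{r}_0,\hat{\mathfrak{q}}_6\}$ yields a remainder $\mathfrak{r}_1 = Cz_2^2+Dz_2+\mathfrak{w}_1$. Dividing $\mathfrak{q}_9$ by $\mathfrak{r}_1$ reduces the $z_2$-degree to one, giving $\mathfrak{r}_2 = z_2(I\rho_2^2+J\rho_2+K)+\mathfrak{w}_2$, and a further division by $\hat{\mathfrak{q}}_6$ produces $\mathfrak{r}_3 = Lz_2+\mathfrak{w}_3$. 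At each substitution I would trace the Newton's polygons of the $(\rho_1,\rho_2)$-parts using the identities $z_2\rho_1 = -(B/A)z_2-\mathfrak{w}_0/A$ and $z_2\rho_2 = -(\DD_1\cdot\qu_2)(\DD_1\cdot\erho_2)^{-1}z_2-\hat{\mathfrak{f}}_6/(\DD_1\cdot\erho_2)$, verifying that the expected cancellations do occur (this is exactly what the Newton-polygon figures summarise). Replacing $\mathfrak{r}_0,\hat{\mathfrak{q}}_6,\mathfrak{r}_1$ by their remainders modulo $\mathfrak{r}_3$ — univariate in $z_2$ only through $\mathfrak{r}_3$ itself — produces $\mathfrak{r}_4,\mathfrak{r}_5,\mathfrak{r}_6 \in \C[\rho_1,\rho_2]$, yielding the elimination ideal $K_3 = \langle\mathfrak{q}_1,\mathfrak{r}_4,\mathfrak{r}_5,\mathfrak{r}_6\rangle \subset \C[\rho_1,\rho_2]$.

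To eliminate $\rho_1$ I would first reduce $\mathfrak{r}_4,\mathfrak{r}_5,\mathfrak{r}_6$ modulo $\mathfrak{q}_1$ (quadratic in $\rho_1$ with nonzero leading coefficient), obtaining three polynomials $\mathfrak{p}_k = \rho_1\xi_k(\rho_2)+\eta_k(\rho_2)$ linear in $\rho_1$. The key genericity input is $\gcd(\xi_1,\xi_3)=1$, which is checked on the specific data $\bm d_*$ of Appendix~\ref{app:data} and then propagates to a Zariski-open neighbourhood via non-vanishing of $\operatorname{res}(\xi_1,\xi_3,\rho_2)$. This coprimality makes the Euclidean-style chain \eqref{eqn:serie_div} terminate with $\mathfrak{s}_{10} = M\rho_1+\mathfrak{v}_{10}(\rho_2)$, and dividing the remaining generators by $\mathfrak{s}_{10}$ leaves univariate polynomials $\mathfrak{z}_1,\mathfrak{z}_2,\mathfrak{z}_3 \in \C[\rho_2]$ of degrees $58,38,30$. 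Here the finishing argument uses Proposition~\ref{p:18sol}: generically $\#V(J)=18$, and by construction $\rhodot_1,\rhodot_2,z_1,z_2,\rho_1$ are rational functions of $\rho_2$ on $V(J)$, so the ideal $J\cap\C[\rho_2]$ is principal and its generator $\mathsf{p}_{18}$ has degree exactly $18$, dividing each $\mathfrak{z}_i$. Writing $K_3 = \langle\mathfrak{s}_{10},\mathsf{p}_{18}\rangle$ and going back through the chain produces the basis $\{\mathfrak{q}_2,\mathfrak{q}_3,\hat{\mathfrak{q}}_4,\mathfrak{r}_3,\mathfrak{s}_{10},\mathsf{p}_{18}\}$, which after normalisation of leading coefficients takes the desired form \eqref{groebner18}; the bound $\deg\mathfrak{h}_k \le 17$ for $k \le 5$ follows from reducing each preceding generator modulo $\mathsf{p}_{18}$. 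The principal obstacle is the bookkeeping: one must certify, at every division, that the leading coefficients in the selected lex-order do not accidentally vanish and that the Newton's polygons come out as claimed — all of which are Zariski-closed failure conditions, and so it suffices (by the discussion in Appendix~\ref{app:AlgGeom}) to verify their non-occurrence on a single explicit dataset such as $\bm d_*$.
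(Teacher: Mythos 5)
Your proposal does not address the statement you were asked to prove. Theorem~\ref{projweakNSS} is the Projective Weak Nullstellensatz: a general equivalence, for an arbitrary homogeneous ideal $I\subset\C[x_0,\ldots,x_n]$, between the emptiness of $V(I)\subset\P^n(\C)$ and the condition that every variable $x_i$ occurs as a pure power among the leading terms of a reduced Gr\"obner basis of $I$. What you have written instead is a proof sketch of Proposition~\ref{p:groebner} — the elimination cascade producing the Gr\"obner basis $\{\rhodot_1+\mathfrak{h}_1,\ldots,\rho_1+\mathfrak{h}_5,\mathsf{p}_{18}\}$ of the specific ideal $J$. Nothing in your argument concerns homogeneous ideals, projective varieties, or the leading-term criterion of statement ii); the theorem you were assigned is a classical result that the paper simply recalls from the literature (it is stated with a citation and carries no proof in the paper at all), and it is used in Section~\ref{s:diffgen} as a black box to certify $V(\widetilde{K})=\emptyset$ from a computed Gr\"obner basis.

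A proof of the actual statement would run along these lines. For i) $\Rightarrow$ ii): if $V(I)=\emptyset$ in $\P^n(\C)$, the affine cone of $I$ has variety contained in $\{\bm 0\}$, so by the (affine) Nullstellensatz the radical of $I$ contains the irrelevant ideal $\langle x_0,\ldots,x_n\rangle$; hence $x_i^{r_i}\in I$ for suitable $r_i$, so $\langle LT(I)\rangle$ contains a pure power of each $x_i$, and since the leading terms of a reduced Gr\"obner basis $G$ generate $\langle LT(I)\rangle$, for each $i$ some $g\in G$ must have $LT(g)$ equal to a pure power of $x_i$ (a pure power of $x_i$ can only be divisible by another pure power of $x_i$). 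For ii) $\Rightarrow$ i): if each variable has a pure power among the $LT(g)$, then $x_i^{r_i}\in\langle LT(I)\rangle$ for each $i$, which forces the affine variety of $I$ to be contained in $\{\bm 0\}$, so the corresponding projective variety is empty. None of this machinery appears in your proposal, so as an answer to the assigned statement it is a complete miss, independently of whether your elimination argument for $J$ is itself sound.
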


\section{Angular Momentum}
\label{app:angmom}

The angular momentum vector $\angmom_i$ at time $t_i\ (i=1,2)$ can be written
as
\[
\angmom_i = \DD_i\rhodot_i + \EE_i\rho_i^2 + \FF_i\rho_i + \GG_i,
\]
with
\[
\DD_i = \qu_i\times\erho_i,\quad \EE_i = \erho_i\times\eort_i,\quad
\FF_i = \qu_i\times\eort_i + \erho_i\times\qudot_i,\quad \GG_i =
\qu_i\times\qudot_i.
\]
We write the difference of the angular momentum vectors as
\begin{equation}
  \angmom_2-\angmom_1 = \bm J(\rho_1,\rho_2) + \DD_2\rhodot_2 -
  \DD_1\rhodot_1,
  \label{eqn:diffc}
\end{equation}
with
\begin{equation}
  \bm J(\rho_1,\rho_2) = \EE_2\rho_2^2-\EE_1\rho_1^2 + \FF_2\rho_2 -
  \FF_1\rho_1 + \GG_2 - \GG_1.
\end{equation}
Projecting \eqref{eqn:diffc} onto $\DD_1\times\DD_2$, we can eliminate
the variables $\rhodot_1, \rhodot_2$ and obtain the polynomial
\begin{equation}
  q(\rho_1,\rho_2) =  q_{2,0}\rho_1^2 + q_{1,0}\rho_1 +
  q_{0,2}\rho_2^2 + q_{0,1}\rho_2 + q_{0,0},
  \label{qpoly}
\end{equation}
with
\[
\begin{array}{l}
q_{2,0} = -\EE_1\cdot\DD_1\times\DD_2,\cr
q_{1,0} = -\FF_1\cdot\DD_1\times\DD_2,\cr
\end{array}
\hskip 1cm
\begin{array}{l}
q_{0,2} = \EE_2\cdot\DD_1\times\DD_2,\cr
q_{0,1} = \FF_2\cdot\DD_1\times\DD_2.\cr
\end{array}
\]

We consider the projections of the angular momentum vectors onto the
line of sight vectors:
\[
c_{ij} = \angmom_i\cdot\erho_j, \qquad i,j=1,2.
\]
The equations
\[
c_{11}(\rho_1,\rho_2) = 0, \qquad c_{22}(\rho_1,\rho_2) = 0
\]
define straight lines in the $(\rho_1,\rho_2)$ plane, while
\[
c_{12}(\rho_1,\rho_2) = 0, \qquad c_{21}(\rho_1,\rho_2) = 0
\]
define conic sections, see Figure~\ref{fig:cij}.

\begin{figure}[h!]
  \begin{center}
\begin{tikzpicture}
  \coordinate (O) at (0,0);
  \coordinate (xl) at (-1,0);
  \coordinate (xr) at (4.5,0);
  \coordinate (yd) at (0,-1.5);
  \coordinate (yu) at (0,3);
  
  \draw[-latex] (xl)--(xr) node [below] {$\rho_1$};
  \draw[-latex] (yd)--(yu) node [left] {$\rho_2$};

  \coordinate (c11d) at (3.07,-1.5);
  \coordinate (c11u) at (3.07,3);
  \coordinate (c22l) at (-1,1.57);
  \coordinate (c22r) at (4.5,1.57);
  \draw[dashed] (c11d)--(c11u);
  \draw[dashed] (c22l)--(c22r);
  \draw (c11u) node [right] {$c_{11}=0$};
  \draw (c22r) node [above] {$c_{22}=0$};

  \draw (1.8,2.9) node [above,color=black!60!white] {$c_{12}=0$};
  \draw (4.1,1) node [below,color=black!90!white] {$q=0$};
  \draw (1.5,-1.1) node [below,color=black!60!white] {$c_{21}=0$};
  
  \draw[scale=2.,color=black!90!white] (0.9,0.3) ellipse (25pt and 20pt); 
  \draw[scale=2.,color=black!60!white,dashed] (0.9,1.02) ellipse (22pt and 12pt); 
  \draw[scale=2.,color=black!60!white,dashed] (0.715,0.3) ellipse (28pt and 25pt); 

  \coordinate (C) at (3.07,1.57);
  \draw (C) node [right] {$C$};
  \fill (C) circle (0.3mm);
  
  \coordinate (P1) at (3.07,-0.37);
  \draw (P1) node [right] {$P_1$};
  \fill (P1) circle (0.3mm);
  \draw[dotted] (-1,-0.37)--(4,-0.37);

  \draw (0.1,-0.3) node[left] {$\rho_2'$};
  \draw (0.1,1.75) node[left] {$\rho_2''$};
  
  \coordinate (P2) at (0.53,1.57);
  \draw (P2) node [above] {$P_2$};
  \fill (P2) circle (0.3mm);
  \draw[dotted] (0.53,-1.5)--(0.53,3);
  
  \draw (0.53,0.1) node[below] {$\rho_1'$};
  \draw (2.9,0.1) node[below] {$\rho_1''$};

\end{tikzpicture}
\end{center}
  \caption{Curves given by $q=0$, $c_{ij}=0$.}
  \label{fig:cij}
\end{figure}
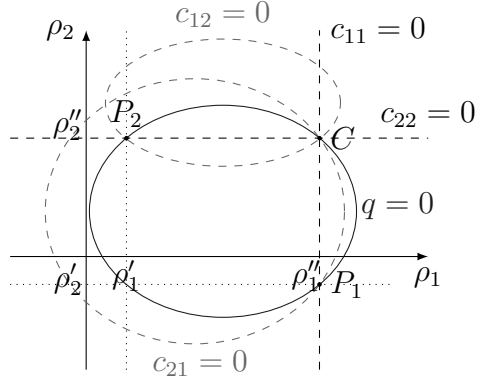

The lines given by $c_{11}=0$ and $c_{22}=0$ intersect in a point
$C=(\rho_1'',\rho_2'')$, with
\begin{equation}
\rho_1'' =
\frac{\qu_1\times\qudot_1\cdot\erho_1}{\erho_1\times\eort_1\cdot\qu_1},
\qquad \rho_2'' =
\frac{\qu_2\times\qudot_2\cdot\erho_2}{\erho_2\times\eort_2\cdot\qu_2}.
\label{rhojsecondo}
\end{equation}
In \cite{gbm15} the authors proved the following result:
\begin{lemma}
  If $q_{20},q_{02}\ne0$, which is generically true, $C$ satisfies the
  equation $q=0$ and is the only point where both $\angmom_1$ and
  $\angmom_2$ vanish.
  \label{lem:pointC}
\end{lemma}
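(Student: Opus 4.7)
My plan is to extract one geometric fact and then deduce both assertions from it. The fact: on the line $\rho_1 = \rho_1''$, the angular momentum $\angmom_1$ is forced to be parallel to $\DD_1$ for every $\rhodot_1$; symmetrically on $\rho_2 = \rho_2''$, the vector $\angmom_2$ is parallel to $\DD_2$. Once this is in hand, both parts of the lemma follow by one-line computations.

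To establish the parallelism I would look at $c_{11} = \angmom_1\cdot\erho_1$. Using $\erre_1 = \rho_1\erho_1+\qu_1$ and $\erredot_1 = \qudot_1+\rhodot_1\erho_1+\rho_1\eort_1$, together with the cancellation $\DD_1\cdot\erho_1 = 0$, this is an affine function of $\rho_1$ alone whose unique root is $\rho_1 = \rho_1''$, as given by \eqref{rhojsecondo}. Hence at $\rho_1 = \rho_1''$ one has $\angmom_1\perp\erho_1$ for every $\rhodot_1$; since $\angmom_1\perp\erre_1$ holds automatically and $\erre_1\in\mathrm{span}\{\erho_1,\qu_1\}$, the vector $\angmom_1$ is orthogonal to this plane and therefore parallel to its normal $\erho_1\times\qu_1 = -\DD_1$. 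Thus $\angmom_1 = \lambda_1\DD_1$ for some scalar $\lambda_1 = \lambda_1(\rhodot_1)$, and symmetrically $\angmom_2 = \lambda_2\DD_2$ on $\rho_2 = \rho_2''$.

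The first assertion follows at once: $q(\rho_1,\rho_2) = (\angmom_1-\angmom_2)\cdot(\DD_1\times\DD_2)$ is a well-defined polynomial on the $(\rho_1,\rho_2)$ plane because the $\rhodot_i$ contributions cancel through $\DD_i\cdot(\DD_1\times\DD_2) = 0$; evaluating at $C$ gives $(\lambda_1\DD_1-\lambda_2\DD_2)\cdot(\DD_1\times\DD_2) = 0$. For the uniqueness, suppose $\angmom_1 = \angmom_2 = \bzero$ at some $(\rho_1,\rho_2,\rhodot_1,\rhodot_2)$; then $c_{11} = c_{22} = 0$, and since each of these is a non-degenerate affine equation in the single variable $\rho_1$ or $\rho_2$ respectively, we are forced to $\rho_1 = \rho_1''$ and $\rho_2 = \rho_2''$, i.e. the projection to the $(\rho_1,\rho_2)$ plane is $C$.

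The only delicate conceptual point, and hence the main thing to verify carefully, is that the parallelism is uniform in $\rhodot_1$: the key cancellation $\DD_1\cdot\erho_1 = 0$ is precisely what makes $c_{11}$ depend on $\rho_1$ alone, which in turn is what makes the conclusion compatible with evaluating $q$ at a single point of the $(\rho_1,\rho_2)$ plane after $\rhodot_1,\rhodot_2$ have been eliminated. The hypotheses $q_{20},q_{02}\neq 0$ enter as generic non-degeneracy: they guarantee that $q$ is a genuine quadratic and, together with the non-vanishing of the denominators in \eqref{rhojsecondo}, ensure that $\rho_1''$ and $\rho_2''$ are well-defined unique roots of $c_{11} = 0$ and $c_{22} = 0$.
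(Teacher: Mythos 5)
Your argument is correct, and it is worth noting that the paper itself offers no proof of this lemma to compare against: it is quoted verbatim from \cite{gbm15}. Your proof is therefore a self-contained reconstruction, and the two computations it rests on check out. Indeed $c_{11}=\det[\erre_1,\erredot_1,\erho_1]$ loses both the $\rhodot_1$ term (through $\det[\qu_1,\erho_1,\erho_1]=0$) and the $\rho_1^2$ term (through $\det[\rho_1\erho_1,\cdot,\erho_1]=0$), leaving $c_{11}=\qu_1\times\qudot_1\cdot\erho_1-\rho_1\,(\erho_1\times\eort_1\cdot\qu_1)$, whose unique root is exactly $\rho_1''$ of \eqref{rhojsecondo}; the parallelism $\angmom_1\parallel\DD_1$ on that line, uniform in $\rhodot_1$, then gives $q(C)=(\lambda_1\DD_1-\lambda_2\DD_2)\cdot\DD_1\times\DD_2=0$, and the converse direction ($\angmom_1=\bzero\Rightarrow c_{11}=0\Rightarrow\rho_1=\rho_1''$) gives uniqueness. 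Two small points you should make explicit. First, the step "orthogonal to $\erho_1$ and to $\erre_1$, hence parallel to $\erho_1\times\qu_1$" needs $\erho_1$ and $\qu_1$ linearly independent, i.e. $\DD_1\neq\bzero$; this is generic but is an additional hypothesis beyond $q_{20},q_{02}\neq 0$ and the non-vanishing of the denominators $\erho_i\times\eort_i\cdot\qu_i$. Second, the way the lemma is used later (e.g.\ in Lemma~\ref{lem:deltaLC}, where $\angmom_1=\angmom_2=\bzero$ is asserted \emph{at} $C$ after $\rhodot_1,\rhodot_2$ have been eliminated via $\mathfrak{q}_2=\mathfrak{q}_3=0$) requires one more line than you give: at $C$ one has $\angmom_1-\angmom_2=\lambda_1\DD_1-\lambda_2\DD_2$, and projecting onto $\DD_1\times(\DD_1\times\DD_2)$ and $\DD_2\times(\DD_1\times\DD_2)$ forces $\lambda_2=0$ and $\lambda_1=0$ respectively, so the eliminated values of $\rhodot_1,\rhodot_2$ do annihilate both angular momenta there. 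Neither point is a gap in substance, but both belong in a complete write-up.
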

Each straight line $c_{ii}=0$, for $i=1,2$, intersects the conic
$q=0$ in another point $P_i$. These points are given by
\[
P_1 = (\rho_1'',\rho_2'), \qquad P_2 = (\rho_1',\rho_2''),
\]
where $\rho_1''$ and $\rho_2''$ are defined in \eqref{rhojsecondo},
and
\begin{equation}
\rho_1' = \frac{\qu_1\times\qu_2\cdot\erho_2}{\erho_1\times\erho_2\cdot\qu_2},
\qquad
\rho_2' = \frac{\qu_1\times\qu_2\cdot\erho_1}{\erho_1\times\erho_2\cdot\qu_1}.
\label{rhojprimo}
\end{equation}
In particular, we note that also the points $P_i$ belong to the conic
$q=0$ and, generically, they are both different from $C$, therefore
at $P_i$ the angular momentum is not zero.

\section{The polynomial $\mathsf{p}_9$}
\label{app:poly9}

In \cite{gbm17}, \cite{gronchi_I-CELMECH} the authors derived a
univariate polynomial of degree 9 in $\rho_2$, that we denoted by
$\mathsf{p}_9$, which is one of the elements of a Gr\"obner basis of
the ideal generated by $\mathfrak{q}_1,\dots,\mathfrak{q}_7$ for the
lex ordering with
\[
\rhodot_1 \succ \rhodot_2 \succ z_1 \succ z_2 \succ \rho_1 \succ \rho_2.
\]

We briefly describe how $\mathsf{p}_9$ is constructed. In \cite{gbm15}
the authors reduced the problem to the computation of the solutions of
the over-determined system
\begin{equation}
q = 0, \hskip 1cm
\boldsymbol\xi = \bzero
\label{qxi}
\end{equation}
in the variables $\rho_1, \rho_2$, where $q$ is defined as in
\eqref{qpoly} and
\[
\begin{split}
\boldsymbol\xi &= \left[\mu(\widetilde{\lenz}_1-\widetilde{\lenz}_2) - (\widetilde{\energy}_1\erre_1 -
  \widetilde{\energy}_2\erre_2)\right]\times(\erre_1-\erre_2)\\
%
%
& = \left[\mu(\widetilde{\lenz}_1-\widetilde{\lenz}_2) -
  (\widetilde{\energy}_1-\widetilde{\energy}_2)\erre_1
  \right]\times(\erre_1-\erre_2)\\
& =
\mu(\widetilde{\lenz}_1-\widetilde{\lenz}_2)\times(\erre_1-\erre_2) +
(\widetilde{\energy}_1-\widetilde{\energy}_2)\erre_1\times\erre_2.
\end{split}
\]
In the expression of $\boldsymbol{\xi}$ the variables $z_1, z_2$
cancel out, and we have eliminated $\rhodot_1, \rhodot_2$ using
$\mathfrak{q}_2=\mathfrak{q}_3=0$.

System \eqref{qxi} is generically equivalent to
\[
q = p_1 = p_2 = 0,
\]
where $p_1, p_2$ are the projections of $\boldsymbol\xi$ onto
$\erho_1, \erho_2$ respectively, and can be written as
\[
p_1(\rho_1,\rho_2) = \sum_{h=0}^4a_{1,h}(\rho_2)\rho_1^h, \qquad
p_2(\rho_1,\rho_2) = \sum_{h=0}^5a_{2,h}(\rho_2)\rho_1^h,
\]
for some polynomials $a_{i,j}$ in $\rho_2$. Moreover, $q$
can be written as
\[
q(\rho_1,\rho_2) = \sum_{h=0}^2b_h(\rho_2)\rho_1^h,
\]
where
\[
b_0(\rho_2) = q_{0,2}\rho_2^2 + q_{0,1}\rho_2 + q_{0,0}, \qquad b_1 =
q_{1,0},\qquad b_2=q_{2,0}.
\]
Assuming $b_2\ne0$, which generically holds, we can define
\begin{equation}
  \begin{split}
    &\beta_1=1,\qquad \beta_2 = -\frac{b_1}{b_2}, \qquad
    \gamma_2=-\frac{b_0}{b_2},\\
    &\beta_{h+1}=\beta_h\beta_2+\gamma_h,\qquad
    \gamma_{h+1}=\beta_h\gamma_2,\qquad h=2,3,4,
  \end{split}
  \label{betagamma}
\end{equation}
and
\[
\begin{split}
&\tilde{a}_{1,1} = a_{1,1} + \sum_{h=2}^4a_{1,h}\beta_h, \qquad
  \tilde{a}_{1,0} = a_{1,0} + \sum_{h=2}^4a_{1,h}\gamma_h, \qquad\\
  &\tilde{a}_{2,1} = a_{2,1} + \sum_{h=2}^5a_{2,h}\beta_h, \qquad
  \tilde{a}_{2,0} = a_{2,0} + \sum_{h=2}^5a_{2,h}\gamma_h.
\end{split}
\]
Finally, define
\[
\tilde{p}_1 = \tilde{a}_{1,1}(\rho_2)\rho_1 + \tilde{a}_{1,0}(\rho_2), \qquad
\tilde{p}_2 = \tilde{a}_{2,1}(\rho_2)\rho_1 + \tilde{a}_{2,0}(\rho_2).
\]
With this notation, the expression of $\mathsf{p}_9$
is
\begin{equation}
  \mathsf{p}_9 = \mathrm{res}(\tilde{p}_1,\tilde{p}_2,\rho_1) =
  \tilde{a}_{1,1}\tilde{a}_{2,0} - \tilde{a}_{1,0}\tilde{a}_{2,1}.
  \label{poly9}
\end{equation}

\section{Selected data with rational values}
\label{app:data}

We list below the data that we have used in Section~\ref{s:diffgen}:
working with rational numbers allows us to make exact computations.
In particular, we assume $\mu=1$ and consider the following observer
positions and velocities:
\begin{equation}
\qu_1 = (1,0,0),\quad \qudot_1 = (1,1,1/2),\quad \qu_2 = (0,1,0),\quad
\qudot_2=(-1,2,-1).
\label{q_qdot_dat}
\end{equation}
The chosen angular rates are
\begin{equation}
\alphadot_1 = 4,\quad\deltadot_1 = 1,\quad \alphadot_2 = -2,\quad
\deltadot_2 = 5.
\label{adot_ddot_dat}
\end{equation}

In order to obtain rational values for the trigonometric
functions of $\alpha_j$, $\delta_j$ we set
\begin{equation}
\sigma_1 = \frac{1}{2},\quad \tau_1 = \frac{1}{3},\quad \sigma_2 =
\frac{2}{3},\quad \tau_2 = \frac{2}{5}
\label{sig_tau_dat}
\end{equation}
and define $\alpha_j$, $\delta_j$ $(j=1,2)$ by
\begin{equation*}
\cos\alpha_j = \frac{1-\sigma_j^2}{1+\sigma_j^2},\quad
\sin\alpha_j = \frac{2\sigma_j}{1+\sigma_j^2},\quad
\cos\delta_j = \frac{1-\tau_j^2}{1+\tau_j^2},\quad
\sin\delta_j = \frac{2\tau_j}{1+\tau_j^2}.
\end{equation*}

\end{appendix}

\bibliography{bib}{}
\bibliographystyle{plain}

\end{document}